\newcommand{\Var}{\mathsf{Var}}
\newcommand{\rV}[1]{{\mathrm{#1}}}
\newcommand{\rvVec}[1]{\pmb{\mathrm{#1}}}
\newcommand{\rvMat}[1]{\pmb{\mathsf{#1}}}
\newcommand{\SNR}{\gamma}
\newcommand{\NND}{\text{NND}}
\newcommand{\mbs}[1]{\pmb{#1}}
\newcommand{\vect}[1]{{\lowercase{\mbs{#1}}}}
\newcommand{\mat}[1]{{\uppercase{\mbs{#1}}}}
\newcommand{\T}{{\scriptscriptstyle\mathsf{T}}}
\renewcommand{\H}{{\scriptscriptstyle\mathsf{H}}}
\newcommand{\cond}{\,\vert\,}
\renewcommand{\Re}[1][]{\ifthenelse{\isempty{#1}}{\operatorname{Re}}{\operatorname{Re}\left(#1\right)}}
\renewcommand{\Im}[1][]{\ifthenelse{\isempty{#1}}{\operatorname{Im}}{\operatorname{Im}\left(#1\right)}}
\newcommand{\bv}{\vect{b}}
\newcommand{\vv}{\vect{v}}
\newcommand{\xv}{\vect{x}}
\newcommand{\yv}{\vect{y}}
\newcommand{\zv}{\vect{z}}
\newcommand{\zerov}{\vect{0}}
\newcommand{\thetav}{\vect{\theta}}
\newcommand{\Am}{\mat{a}}
\newcommand{\Cm}{\mat{c}}
\newcommand{\Dm}{\mat{d}}
\newcommand{\Hm}{\mat{h}}
\newcommand{\Qm}{\mat{q}}
\newcommand{\Um}{\mat{u}}
\newcommand{\Vm}{\mat{V}}
\newcommand{\Wm}{\mat{w}}
\newcommand{\Id}{\mat{i}}
\newcommand{\Thetat}[1][]{\ifthenelse{\isempty{#1}}{\Theta_{\text{t}}}{\Theta_{\text{t},{#1}}}}
\newcommand{\Thetar}[1][]{\ifthenelse{\isempty{#1}}{\Theta_{\text{r}}}{\Theta_{\text{r},{#1}}}}
\newcommand{\nt}{n_{\text{t}}}
\newcommand{\nr}{n_{\text{r}}}
\newcommand{\LambdaR}{\mbs{\Lambda}_{\text{R}}}
\newcommand{\LambdaT}{\mbs{\Lambda}_{\text{T}}}
\newcommand{\rvLambdaR}{\rvMat{\Lambda}_{\text{R}}}
\newcommand{\rvLambdaT}{\rvMat{\Lambda}_{\text{T}}}
\newcommand{\rvA}{\rvMat{A}}
\newcommand{\xML}{\hat{\pmb{x}}_{\mathrm{ML}}}
\newcommand{\xMMSE}{\hat{\pmb{x}}_{\mathrm{LMMSE}}}
\newcommand{\xMLnaive}{\hat{\pmb{x}}^{0}_{\mathrm{ML}}}
\newcommand{\likelihood}[3]{p(#1 \cond #2, #3)}
\newcommand{\E}[1][]{\ifthenelse{\isempty{#1}}{\mathbb{E}}{\mathbb{E}\left[#1\right]}}
\newcommand{\PP}[1][]{\ifthenelse{\isempty{#1}}{\mathbb{P}}{\mathbb{P}\left[#1\right]}}
\newcommand{\indic}{\mathbbm{1}}
\newcommand{\xMLapp}{\hat{\pmb{x}}_{\mathrm{aML}}}
\newcommand{\diag}{\mathrm{diag}}
\newcommand{\llh}{f}
\newcommand{\allh}{\hat{f}}
\newcommand{\Amx}{\Am}
\newcommand{\bvx}{\bv}
\newcommand{\Wmx}{\Wm_{\!\!\xv}}
\newcommand{\Wmxhat}{\Wm_{\!\!\hat{\xv}}}
\newcommand{\Wmxhatnaive}{\Wm_{\!\!\xMLnaive}}
\title{An Approximate ML Detector for MIMO Channels Corrupted by Phase Noise} 
\author{Richard Combes and Sheng Yang
\thanks{The authors are with the Laboratory of Signals and Systems~(L2S), CentraleSup\'elec, 3 rue Joliot-Curie, 91190
Gif-sur-Yvette, France.~(e-mail: \texttt{richard.combes,sheng.yang@centralesupelec.fr})}
}
\newtheorem{proposition}{Proposition}
\newtheorem{lemma}{Lemma}
\newtheorem{remark}{Remark}
\begin{document}

\maketitle

\begin{abstract}
  We consider the multiple-input multiple-output~(MIMO) communication channel impaired by phase noises
  at both the transmitter and receiver. We focus on the maximum likelihood~(ML) detection problem for uncoded
  single-carrier transmission. We derive an approximation of the likelihood function, based on which we propose an
  efficient detection algorithm. The proposed algorithm, named \emph{self-interference whitening}~(SIW), consists in
  1)~estimating the self-interference caused by the phase noise perturbation, then~2)~whitening the said interference,
  and finally~3)~detecting the transmitted vector.  While the exact ML solution is computationally intractable, we
  construct a simulation-based lower bound on the error probability of ML detection. Leveraging this lower bound, we
  perform extensive numerical experiments demonstrating that SIW is, in most cases of interest, very close to optimal with moderate phase noise.  More
  importantly and perhaps surprisingly, such near-ML performance can be achieved by applying only twice the
  nearest neighbor detection algorithm. In this sense, our results reveal a striking fact: near-ML detection of phase noise
  corrupted MIMO channels can be done as efficiently as for conventional MIMO channels without phase noise. 
\end{abstract}

\begin{IEEEkeywords}
MIMO systems, phase noise, maximum likelihood detection, probability of error.
\end{IEEEkeywords}

\section{Introduction}
\label{sec:introduction}

We consider the signal detection problem for the following discrete-time multiple-input multiple-output~(MIMO) channel
\begin{equation}
  \yv = \diag\bigl( e^{j\theta_{\text{r},1}}, \ldots, e^{j\theta_{\text{r}, \nr}} \bigr) \pmb{H} \diag\bigl(
  e^{j\theta_{\text{t},1}}, \ldots, e^{j\theta_{\text{t}, \nt}} \bigr) \xv + \zv, \label{eq:input-output}
\end{equation}%    
where $\Hm \in \mathbb{C}^{\nr \times \nt}$ is the channel matrix known to the receiver; $\zv \in
\mathbb{C}^{\nr\times1}$ represents a realization of the additive noise whereas ${\theta_{\text{t},l}}$ and
${\theta_{\text{r},k}}$ are the phase noises at the $l$\,th transmit antenna and the $k$\,th receive antenna,
respectively; the input vector $\xv\in\mathbb{C}^{\nt\times1}$ is assumed to be carved from a quadratic amplitude modulation~(QAM). The goal is to estimate $\xv$ from the observation $\yv\in\mathbb{C}^{\nr\times1}$, with only statistical knowledge on the additive noise and the phase noises.  

In the case where the phase noise is absent, the problem is well understood, and the maximum likelihood~(ML)
solution can be found using any nearest neighbor detection~(\NND) algorithm~(see~\cite{Agrell} and the
references therein). In particular, the sphere decoder~\cite{VB} has been shown to be very efficient~\cite{HV:05a}, so that its
expected complexity~(averaged over channel realizations) is polynomial in the problem dimension $\nt$. Furthermore, there exist
approximate \NND~algorithms~(e.g., based on lattice reduction) that achieve near-ML performance when applied for MIMO detection~\cite{JE:10}. 

The presence of phase noise in \eqref{eq:input-output} is both a practical and long-standing problem in communication. In their seminal
paper~\cite{Foschini-PN-constellation} back in the 70's, Foschini \emph{et al.}~used this model to capture the
residual phase jitter at the phase-locked loop of the receiver side, and investigated both the performance of decoding algorithms as
well as constellation design in the scalar case~($\nt=\nr=1$). As a matter of fact, in most wireless
communication systems, phase noise is present due to the phase and frequency instabilities in the radio
frequency oscillators used at both the transmitter and the receiver~\cite{pn}. The
channel~\eqref{eq:input-output} can be seen as a valid mathematical model when the phase noise varies slowly as
compared to the symbol duration.\footnote{As pointed out in \cite{Ghozlan-IT} and the references therein, an
effective discrete-time channel is usually obtained from a waveform phase noise channel after filtering. When
the continuous-time phase noise varies rapidly during the symbol period, the filtered output also suffers from
amplitude perturbation. More discussion is provided in Section~\ref{sec:validity}.}
While phase noise can be practically ignored in conventional MIMO systems, its impact becomes prominent at higher carrier frequencies since it can
be shown that phase noise power increases \emph{quadratically} with carrier frequency~\cite{pn,Demir}. The
performance degradation due to phase noise becomes even more severe with the use of higher order modulations for which the angular
separation between constellation points can be small.  At medium to high SNR,  phase noise dominates additive
noise, becoming the capacity bottleneck~\cite{Durisi-capa,Yang-Shamai-PN}. As for signal detection, finding the
ML solution for the MIMO phase noise channel~\eqref{eq:input-output} is hard in general. Indeed, unlike for
conventional MIMO channels, the likelihood function of the transmitted signal cannot be obtained in closed form. 

{\bf Our Contribution.} In this work, we propose an efficient MIMO detection algorithm which finds an approximate ML
solution in the presence of phase noise. The main contributions of this work are summarized as follows.
\begin{itemize}
\item[(i)] We derive a tractable approximation of the likelihood function of the transmitted signal. While
the exact likelihood does not have a close-form expression, the proposed approximation has a simple form and
turns out to be accurate for weak to medium phase noises. 
\item[(ii)] Since maximizing the approximate likelihood function over a discrete signal set is still hard, we propose a heuristic method that finds an
approximate solution by applying twice the nearest neighbor detection algorithm. The proposed algorithm, called \emph{self-interference whitening}~(SIW), has a simple geometric interpretation. Intuitively, the phase noise perturbation generates self-interference that depends on the transmitted signal through the covariance matrix. The main idea is to first estimate the covariance of the self-interference with a potentially inaccurate
initial signal solution, then perform the whitening with the estimated covariance, followed by a second
detection. From the optimization point of view, our algorithm can be seen as a (well-chosen) concave approximation to a non-concave objective function.
\item[(iii)] We assess the performance of SIW and competing algorithms in different communication scenarios. Since the error probability of ML decoding is unknown, we propose a simulation-based lower bound which we use as a benchmark. Simulation results show that SIW achieves near ML performance in most scenarios.  In this sense, our work reveals that near optimal MIMO detection with phase noise can be done as efficiently as without phase noise. Although the likelihood approximation is derived using the assumption that the phase noise has a Gaussian distribution, our numerical experiments show that SIW works well even with non-Gaussian phase noises. 
\end{itemize}

{\bf Related Work.} Receiver design with phase noise mitigation has been extensively investigated in the past years~(see~\cite{rajet2,allerton} and references therein). More complex channel models, including multi-carrier systems~(e.g., OFDM) and time-correlated phase noises~(e.g., the Wiener process) have also been considered.  In particular, joint data detection and phase noise estimation algorithms have been proposed in \cite{mehr, rajet2}. A phase noise estimation based scheme to improve the system performance for smaller alphabets has been proposed in \cite{allerton}. However, the challenging problem of signal detection in MIMO phase noise channels using higher order modulation, where performance is extremely sensitive to phase noise, has not been addressed adequately before. 

The remainder of the paper is organized as follows. We start with a formal description of the problem in the next section. The approximation of the likelihood function is derived in Section~\ref{sec:llh}, followed by the proposed algorithm described in Section~\ref{sec:algo}. The hardness of finding the exact ML solution is investigated in Section~\ref{sec:hardness}. We present the numerical experiments in Section~\ref{sec:examples}. Further discussion on the proposed algorithm and relevance of the considered channel model is provided in Section~\ref{sec:discussions}. Section~\ref{sec:conclusions} concludes the paper.

\section{Assumptions and Problem Formulation}
\label{sec:model}

\subsection*{Notation}
Throughout the paper, we use the following notation. For random quantities, we use upper case letters,
e.g., $X$, for scalars, upper case letters with bold and non-italic fonts, e.g., $\rvVec{V}$, for vectors, and upper
case letter with bold and sans serif fonts, e.g., $\rvMat{M}$, for matrices.  Deterministic quantities are denoted in
a rather conventional way with italic letters, e.g., a scalar $x$, a vector $\pmb{v}$, and a matrix $\pmb{M}$.
The Euclidean norm of a vector $\vv$ is denoted by $\|\pmb{v}\|$,
respectively. The transpose and conjugated transpose of $\pmb{M}$ are $\pmb{M}^T$ and $\pmb{M}^H$, respectively.

\subsection{System model}

In the following, we describe formally the channel model~\eqref{eq:input-output} presented in the previous section.
We assume a MIMO channel with $\nt$ transmit and $\nr$ receive antennas.  Let $\pmb{H}$ denote the channel matrix,
where the $(k,l)$-th element of $\pmb{H}$, denoted as $h_{k,l}$, represents the channel gain between the $l$\,th
transmit antenna and $k$\,th receive antenna.  
The transmitted vector is denoted by  $\pmb{x} = [x_1, \ldots, x_{\nt}]^T$,
where $x_l\in\mathcal{X}, l = 1,\ldots,\nt$, $\mathcal{X}$ being 
typically a QAM constellation with normalized average energy, i.e., $\frac{1}{|\mathcal{X}|}
\sum_{x\in\mathcal{X}} |x|^2 = 1$. For a given transmitted vector $\xv$, the
received vector in base-band can be written as the following random vector 
\begin{equation}
  \rvVec{Y} = \rvLambdaR \pmb{H} \rvLambdaT\, \xv + \rvVec{Z},
\end{equation}
where the diagonal matrices $\rvLambdaR :=
\text{diag}\left(e^{j\Thetar{,1}}, 
\ldots,   e^{j\Thetar{,\nr}}\right)$ and $\rvLambdaT :=
\text{diag}\left(e^{j\Thetat[1]}, \ldots,   e^{j\Thetat[\nt]}\right)$
capture the phase perturbation at the receiver and transmitter, respectively; $\rvVec{Z}$ is
the additive white Gaussian noise~(AWGN) vector with $\rvVec{Z} \sim
\mathcal{CN}(0,\SNR^{-1} \Id)$, where $\SNR$ is the nominal signal-to-noise ratio~(SNR). 
We assume that the phase noise $\rvVec{\Theta} := [\Thetat[1]\ \cdots\ \Thetat[\nt]\ \Thetar[1]\ \cdots\
\Thetar[\nr]]^T$ is jointly Gaussian with $\rvVec{\Theta} \sim \mathcal{N}(0, \Qm_\theta)$ where the covariance matrix
$\Qm_\theta$ can be arbitrary. Note that this model includes as a special case the uplink channel in which $\nt$ is
the number of single-antenna users. In such a case, the transmit phase noises are independent. For simplicity, we
consider uncoded transmission in which each symbol $x_l$ can take any value from $\mathcal{X}$ with equal probability. 

Further, we assume that the channel matrix can be random but is perfectly known at
the receiver, whereas such knowledge at the transmitter side is irrelevant in
uncoded transmission. We
also define $\rvMat{H}_\Theta := \rvLambdaR \Hm \rvLambdaT$ and accordingly
$\pmb{H}_\theta$ for some realization of $\rvVec{\Theta} = \pmb{\theta}$.  By
definition, we have $\pmb{H}_{0} = \pmb{H}$. Finally, we ignore the temporal
correlation of the phase noise process and the channel process, and focus on the
spatial aspect of the signal detection problem.

\subsection{Problem formulation}

With AWGN, we have the following conditional probability density function~(pdf)
\begin{align}
p(\yv\cond \pmb{x}, \pmb{\theta}, \pmb{H})
&= \frac{\SNR^{\nr}}{\pi^{\nr}}e^{- \SNR {\| \pmb{y} - \pmb{H}_{\!\theta} \,
\pmb{x} \|^2} }, \label{eq:pdf}
\end{align}%
from which we obtain the likelihood function by integrating over $\rvVec{\Theta}$
\begin{align}
  \likelihood{\yv}{\xv}{\Hm} &= \E_{\rvVec{\Theta}} \bigl[  p(\yv\cond \pmb{x}, \rvVec{\Theta}, \pmb{H})
  \bigr] \\ &=  \ln \left( \E_{\rvVec{\Theta}} \left[  e^{- \SNR {\| \pmb{y} - \pmb{H}_{\!\rvVec{\Theta}} \, \pmb{x}
  \|^2} } \right] \right) + \ln \frac{\SNR^{\nr}}{\pi^{\nr}}. 
\end{align}%
The ML detector finds an input vector from the alphabet $\mathcal{X}^{\nt}$ such that the
likelihood function is maximized. In practice, it is often more convenient to use the log-likelihood function
as the objective function, i.e., after removing a constant term
\begin{align}
  \llh(\xv, \yv, \Hm, \SNR, \Qm_\theta) 
  :\!\!&=  \ln \left( \E_{\rvVec{\Theta}} \left[  e^{- \SNR {\| \pmb{y} - \pmb{H}_{\!\rvVec{\Theta}} \, \pmb{x}
  \|^2} } \right] \right), \label{eq:llh0}
\end{align}%
where the arguments $\SNR$ and $\Qm_\theta$ can be omitted whenever confusion is not likely. 
Thus,
\begin{align}
  \xML(\pmb{y},\pmb{H}) &:= \arg \max_{\pmb{x} \in
  \mathcal{X}^{\nt}} \llh(\xv, \yv, \Hm). \label{eq:ML}
\end{align}%
From \eqref{eq:ML} we see two main challenges to compute the optimal solution:  
\begin{enumerate}
  \item The expectation in~\eqref{eq:ML} cannot be obtained in closed form. A numerical
    implementation is equivalent to finding the numerical integral in $\nt+\nr$ dimensions. This
    can be extremely hard in high dimensions. %~{\color{blue}\textbf{need citations?}} 
  \item The size of the optimization space, $|\mathcal{X}|^{\nt}$, can be prohibitively large when the modulation size $|\mathcal{X}|$ and the input dimension~$\nt$ become large. 
\end{enumerate}
In Section~\ref{sec:hardness}, we examine in more details why both of these issues are indeed challenging.

In a conventional MIMO channel without phase noise, finding the ML solution is reduced to solving the following
problem
\begin{align}
  \xMLnaive(\pmb{y},\pmb{H}) &:= \arg \min_{\pmb{x} \in
  \mathcal{X}^{\nt} } \| \pmb{y} - \pmb{H}_0\, \pmb{x} \|^2, \label{eq:ML0}
\end{align}%
which is also called the minimum Euclidean distance detection or nearest neighbor detection~(NND).  Although the search
space in \eqref{eq:ML0} remains large, the expectation is gone. Furthermore, since the objective function is the
Euclidean distance, efficient algorithms~(e.g., sphere decoder~\cite{VB} or lattice decoder~\cite{Agrell})
exploiting the geometric structure of the problem can be applied without searching over the whole space
$\mathcal{X}^{\nt}$. It is shown in \cite{HV:05a} that the sphere decoder has a polynomial average complexity
with respect to the input dimension~$\nt$ when the channel matrix is drawn i.i.d.~from a Rayleigh distribution. 

In practice, one may simply ignore the existence of phase noise and still apply
\eqref{eq:ML0} to obtain $\hat{\pmb{x}}_{\text{ML}}^{0}$ which we refer to as the
\emph{naive ML} solution in our work. While this can work relatively well when the phase
noise is close to $0$, it becomes highly suboptimal with stronger phase noise which is usually
the case in high frequency bands with imperfect oscillators. 
In this paper, we provide a near ML solution by circumventing the two challenges mentioned earlier. We first
propose an approximation of the likelihood function. Then we propose an algorithm to solve approximately the
optimization problem~\eqref{eq:ML}.  

\section{Proposed Scheme}

\subsection{Proposed Approximation of the Likelihood Function}
\label{sec:llh}

In this section, we propose to approximate the likelihood function with the implicit assumption that the phase noise
is not large. Indeed, in practice, the standard deviation of the phase noise is
typically smaller than $10$
degrees $\approx 0.174$ rad. For stronger phase noises, it is no longer reasonable to use QAM and the problem should be addressed differently. 

The likelihood function~\eqref{eq:pdf} depends on the Euclidean norm 
$\|\yv - \LambdaR \Hm \LambdaT \xv\| = \|\LambdaR^H \yv -  \Hm \LambdaT \xv\|$, in
which the difference vector $\LambdaR^H \yv -  \Hm \LambdaT \xv$ can be rewritten
and approximated as follows
\begin{align}
  \LambdaR^H \yv -  \Hm \LambdaT \xv  &=  [-\Hm \Dm_x \ \ \Dm_y ] \begin{bmatrix} e^{j {\thetav}_{t}} \\ e^{-j{\thetav}_{r}} \end{bmatrix} \\
    &\approx  (\yv - \Hm\xv) - j [\Hm \Dm_x \ \ \Dm_y ]\,  {\thetav}, \label{eq:linapp}
\end{align}%
where we define $\Dm_x := \diag(x_1,\ldots,x_{\nt})$, $\Dm_y := \diag(y_1,\ldots,y_{\nr})$, and recall that $\thetav :=
\left[\thetav_t^T\quad \thetav_r^T\right]^T$; \eqref{eq:linapp} is from the linear
approximation\footnote{Here we use, with a slight abuse of notation,
$e^{j\thetav}$ to denote the vector obtained from the element-wise complex
exponential operation. Similarly, the little-$o$ Landau notation $o(\thetav)$ is
element-wise.} $e^{j\thetav} = 1 + j\thetav + o(\thetav)$. Thus 
the Euclidean norm has the corresponding real approximation:
\begin{align}
  \|\yv - \LambdaR \Hm \LambdaT \xv\|^2 &\approx \|\Amx {\thetav} + \bvx\|^2,
\end{align}%
where $\Amx \in \mathbb{R}^{2\nr \times (\nt+\nr)}$ and $\bvx\in \mathbb{R}^{2\nr \times 1}$ are defined as
\begin{align}
  \Amx &:= \begin{bmatrix} \Im[\Hm \Dm_x] & \Im[\Dm_y] \\ -\Re[\Hm \Dm_x] & -\Re[\Dm_y] \end{bmatrix}, \quad \bvx:= 
    \begin{bmatrix} \Re[\yv - \Hm \xv] \\ \Im[\yv - \Hm\xv] \end{bmatrix}.  \label{eq:Aandb}
\end{align}%

With the above approximation, we can derive the approximation of the log-likelihood function.

\begin{proposition}\label{prop:1}
  Let $\Amx$ and $\bvx$ be defined as in \eqref{eq:Aandb}. Then we have the following approximation of the log-likelihood function $\ln \left( \E_{\rvVec{\Theta}} \left[ e^{- \SNR {\| \pmb{y} -
  \pmb{H}_\theta \, \pmb{x} \|^2} } \right] \right)  \approx \allh(\xv,\yv,\Hm,\SNR,\Qm_\theta)$ with
  \begin{align}
    \allh(\xv,\yv,\Hm) := - \SNR\, \bvx^\T \Wmx^{-1} \bvx - \frac{1}{2} \ln\det\left( \Wmx \right),  \label{eq:aML}
  \end{align}%
  where $\Wmx$ is defined as 
  \begin{align}
    \Wmx &:= \Id + 2 \SNR \Amx \Qm_{\theta} \Amx^\T. \label{eq:Wmx}
  \end{align}%
  Hence, the proposed approximate ML~(aML) solution is 
  \begin{align}
    \xMLapp(\yv,\Hm) &:= \arg \min_{\pmb{x} \in \mathcal{X}^{\nt}} \left\{ \SNR\, \bvx^\T
    \Wmx^{-1} \bvx + \frac{1}{2} \ln\det\left( \Wmx \right) \right\}. \label{eq:MLapp} 
  \end{align}%
\end{proposition}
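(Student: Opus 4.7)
The plan is to reduce $\allh$ to an explicit Gaussian integral by substituting the quadratic approximation $\|\yv - \LambdaR \Hm \LambdaT \xv\|^2 \approx \|\Amx\thetav + \bvx\|^2$ from \eqref{eq:linapp} into \eqref{eq:llh0}, to evaluate that integral in closed form using $\rvVec{\Theta} \sim \mathcal{N}(0, \Qm_\theta)$, and finally to recognize the resulting expression as \eqref{eq:aML} via the Woodbury and Sylvester determinant identities. Once the quadratic approximation is plugged in, the expectation becomes the moment of an exponential of a quadratic form of a Gaussian vector, which is standard.

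First I would expand $-\SNR\|\Amx\thetav + \bvx\|^2$ into its quadratic, linear, and constant parts in $\thetav$, then combine the quadratic part with $-\tfrac{1}{2}\thetav^\T \Qm_\theta^{-1}\thetav$ coming from the density of $\rvVec{\Theta}$ to obtain a new quadratic form with precision matrix $\Pm := \Qm_\theta^{-1} + 2\SNR\, \Amx^\T \Amx$. Completing the square in $\thetav$ and applying the standard Gaussian integral yields
\[
\E_{\rvVec{\Theta}}\!\left[e^{-\SNR \|\Amx \rvVec{\Theta} + \bvx\|^2}\right] = \frac{1}{\sqrt{\det(\Qm_\theta \Pm)}}\, \exp\!\left(-\SNR\, \bvx^\T \bigl(\Id - 2\SNR\, \Amx \Pm^{-1} \Amx^\T\bigr)\, \bvx\right).
\]

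Second, I would match this against \eqref{eq:aML} by applying two standard matrix identities to $\Wmx = \Id + 2\SNR\, \Amx \Qm_\theta \Amx^\T$. The Woodbury identity turns $\Wmx^{-1}$ into exactly the bracket $\Id - 2\SNR\, \Amx \Pm^{-1} \Amx^\T$ appearing in the exponent, while Sylvester's determinant identity gives $\det(\Qm_\theta \Pm) = \det(\Id + 2\SNR\, \Amx \Qm_\theta \Amx^\T) = \det(\Wmx)$. Taking the logarithm then produces \eqref{eq:aML}, and \eqref{eq:MLapp} follows since $\xMLapp$ maximizes $\allh$ over $\mathcal{X}^{\nt}$. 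The main obstacle is algebraic rather than conceptual: tracking the factors of $2\SNR$ consistently when completing the square, and correctly pairing $\det(\Qm_\theta)^{-1/2}$ with $\det(\Pm)^{-1/2}$ so that the Woodbury/Sylvester rewrite produces $\Wmx$ on the nose. Everything else is a routine Gaussian-integral calculation.
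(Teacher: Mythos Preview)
Your proposal is correct and follows essentially the same route as the paper: complete the square in the Gaussian integral to obtain a new precision matrix (the paper's $\Qm^{-1}$ is your $\Pm$), then apply the Woodbury identity to the exponent and the Sylvester determinant identity to the prefactor to arrive at $\Wmx$. The only cosmetic difference is that the paper does not name the determinant step explicitly, but the computation is identical.
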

\begin{proof}
Since we assume that $\rvVec{\Theta} \sim \mathcal{N}(0, \Qm_\theta)$, we have
\begin{align}
  {\E_{\rvVec{\Theta}}\left[ e^{-\SNR\|\Amx \rvVec{\Theta} + \bvx \|^2} \right]}
 &= \frac{1}{\sqrt{\det(2\pi \Qm_{\theta})}}\int_{\thetav} \text{d}\,\thetav \exp\biggl( -\thetav^\T\underbrace{(\SNR\Amx^\T \Amx +
 \frac{1}{2}\Qm_\theta^{-1})}_{\frac{1}{2}\Qm^{-1}} \thetav - 2\SNR\bvx^\T \Amx\thetav -\SNR \| \bvx \|^2 \biggr) \\
 &= \sqrt{\frac{\det(\Qm)}{\det(\Qm_\theta)}} \exp\left(- \SNR\| \bvx \|^2 + \SNR^2 \bvx^\T \Amx(2\Qm)\Amx^\T \bvx\right) \nonumber \\
 &\quad \cdot \int_{\thetav} \text{d}\,\thetav \frac{1}{\sqrt{\det(2\pi\Qm)}} \exp\biggl(
 -(\thetav+\SNR(2\Qm)\Amx^\T\bvx)^\T \frac{1}{2}\Qm^{-1} (\thetav+\SNR(2\Qm)\Amx^\T\bvx)\biggr)
 \label{eq:int}
 \\
&= \sqrt{\frac{\det(\Qm)}{\det(\Qm_\theta)}} \exp\left(-\SNR \| \bvx \|^2 + \SNR^2 \bvx^\T
\Amx(2\Qm)\Amx^\T \bvx\right) \label{eq:pdfq} \\
&= \sqrt{\frac{\det(\Qm)}{\det(\Qm_\theta)}} \exp\left(- \SNR \bvx^\T (\Id - \SNR \Amx( (2\Qm_\theta)^{-1} +
\SNR\Amx^\T\Amx)^{-1}\Amx^\T )\bvx\right) \\
&= \sqrt{\frac{\det(\Qm)}{\det(\Qm_\theta)}} \exp\left(- \SNR \bvx^\T (\Id + \SNR \Amx
(2\Qm_\theta)\Amx^\T)^{-1} \bvx\right) \label{eq:inv}\\
&= \frac{1}{\sqrt{\det(\Id + 2 \SNR \Amx \Qm_\theta \Amx^\T)}} \exp\left(- \SNR \bvx^\T (\Id +
2\SNR\Amx \Qm_\theta \Amx^\T)^{-1} \bvx\right),
\end{align}%
where \eqref{eq:pdfq} holds since the integrand in \eqref{eq:int} is a pdf with respect to ${\thetav}$; \eqref{eq:inv} is from the Woodbury matrix identity $ \left(\Id+\Um\Cm\Vm \right)^{-1} = \Id -
\Um \left(\Cm^{-1}+\Vm\Um \right)^{-1} \Vm$. Taking the logarithm and we obtain the approximated
log-likelihood function~\eqref{eq:aML}. 
\end{proof}

\begin{figure}
  \begin{subfigure}[b]{0.329\textwidth}
\includegraphics[width=\textwidth]{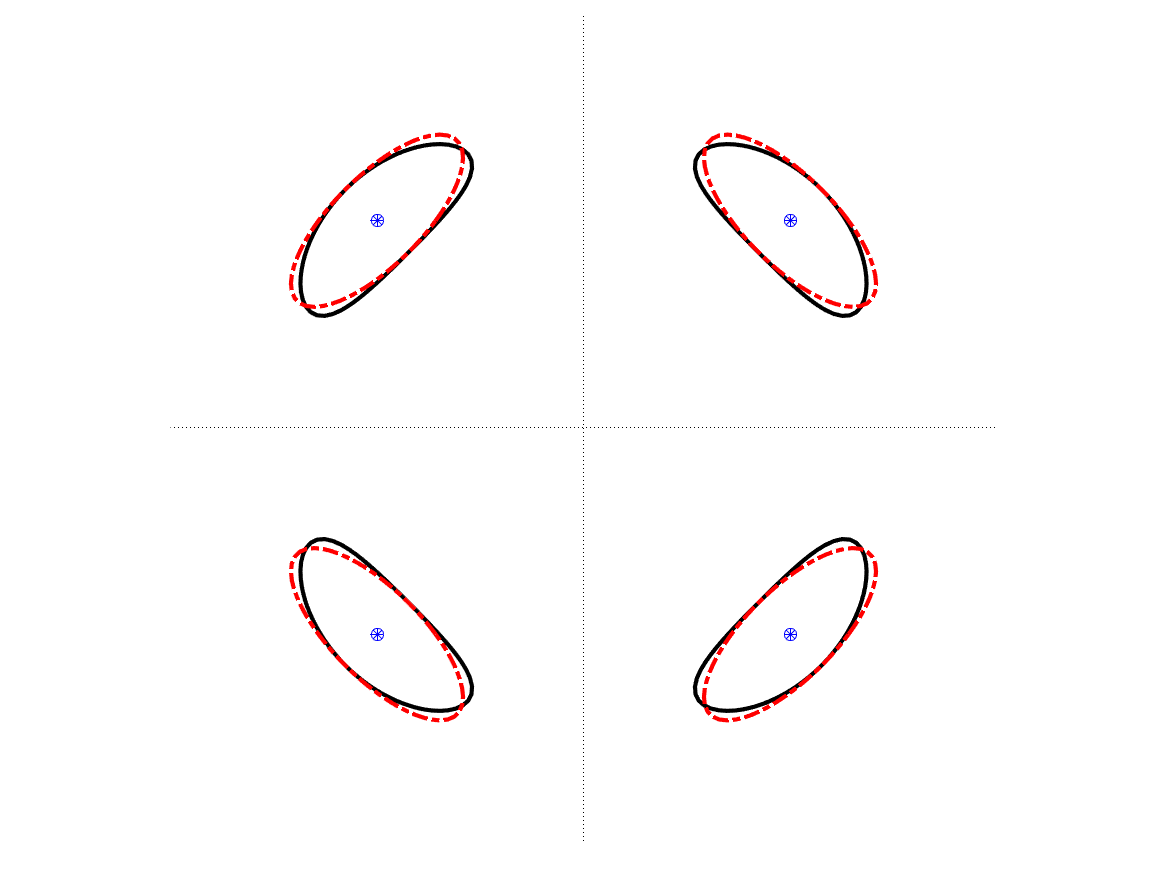}
\caption{4-QAM, $\llh = \allh = -10$.}
\label{fig:constellation}
  \end{subfigure}
  \begin{subfigure}[b]{0.329\textwidth}
\includegraphics[width=\textwidth]{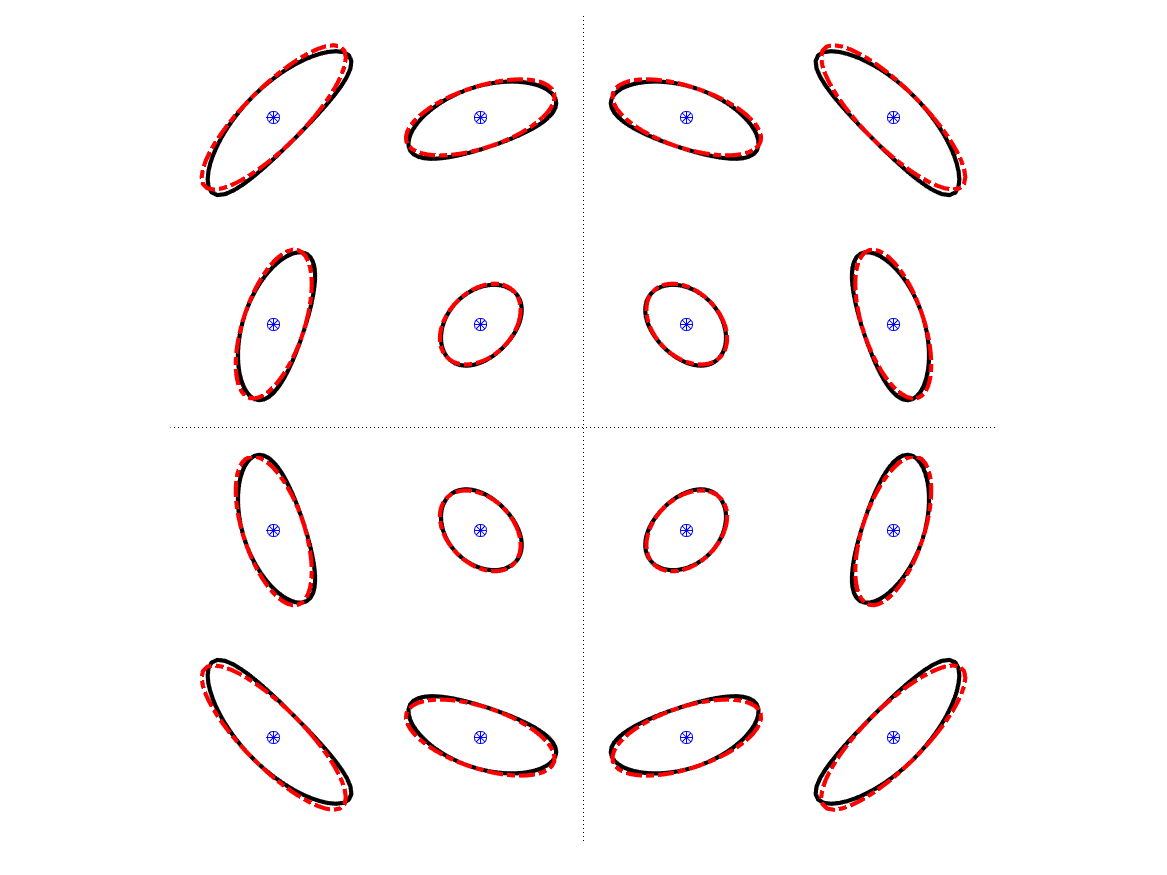}
\caption{16-QAM, $\llh = \allh = -4$.}
\label{fig:constellation16}
  \end{subfigure}
  \begin{subfigure}[b]{0.329\textwidth}
\includegraphics[width=\textwidth]{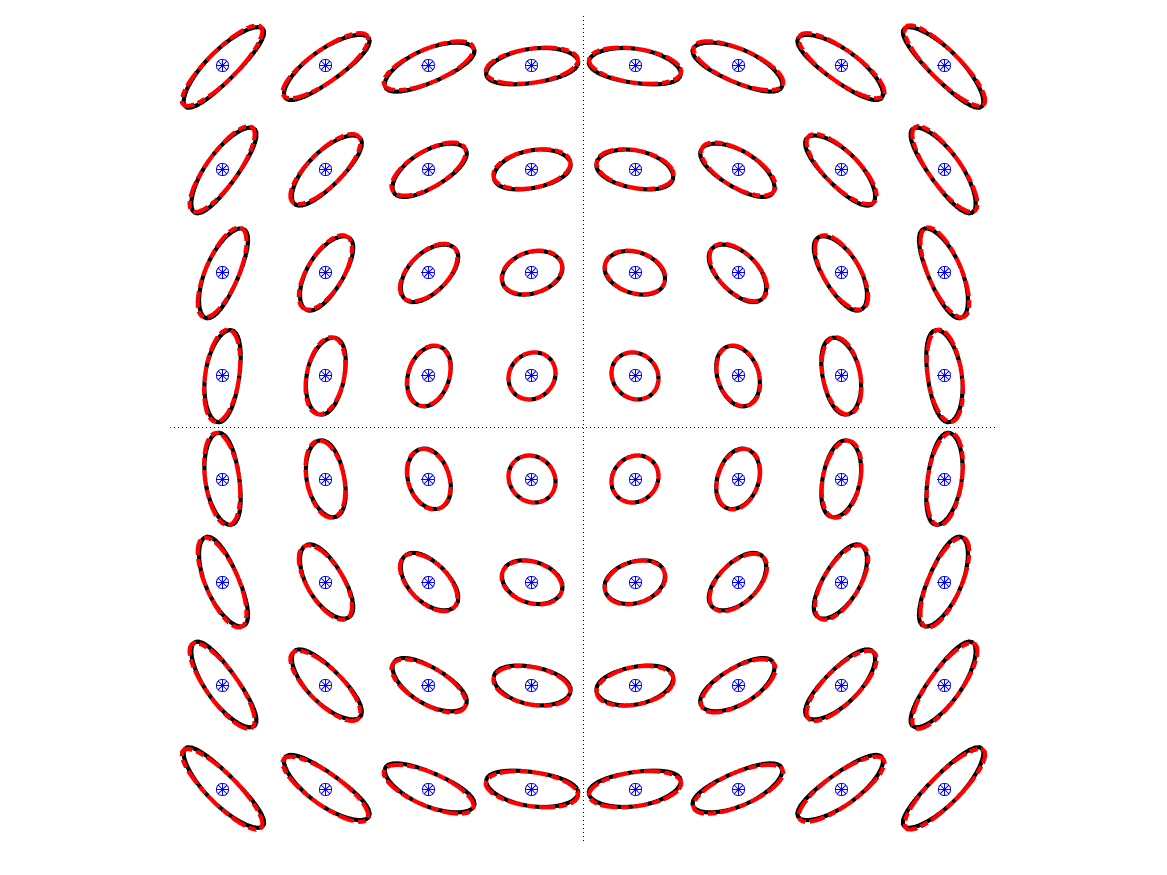}
\caption{64-QAM, $\llh = \allh = -1.6$.}
\label{fig:constellation64}
  \end{subfigure}
%%  \centering
 \caption{The proposed approximation of the likelihood function in the scalar case. 
 Solid line is the actual likelihood level set, dashed line is the approximation. Here $\SNR=30$dB and phase noise has standard deviation~$3^\circ$ at the transmitter and at the receiver. }
  \label{fig:illustration}
\end{figure}

In Figure~\ref{fig:illustration}, we illustrate the proposed approximation for $4$-, $16$-, and $64$-QAM. In
all three cases, we plot for each constellation point a level set of the likelihood function with respect to
``$\yv$'' in solid line. The level sets of the approximated likelihood function are plotted similarly in dashed
line. While the likelihood function is evaluated using numerical integration, the approximation is in closed form given by \eqref{eq:aML}. In this figure, we observe that the approximation is quite accurate, especially
for signal points with smaller amplitude. Further, the resemblance of the level sets for the approximate
likelihood to ellipsoids suggests that the main contribution in the right hand side of \eqref{eq:aML} comes
from the first term $-\SNR\, \bvx^\T \Wmx^{-1} \bvx$. We shall exploit this feature later on to construct the
proposed algorithm.  

\begin{remark}\label{remark:1}
We can check that when $\SNR\Qm_{\theta}\to 0$, i.e., when the phase noise 
vanishes faster than the AWGN does, the above
solution minimizes $\| \bvx \|^2 = \| \yv - \Hm \xv\|^2$, which corresponds to the optimal NND
solution in the conventional MIMO case. Nevertheless, when the phase noise does not vanish, the
approximate log-likelihood function~\eqref{eq:aML} depends on $\xv$ in a rather complex way, due to
the presence of the matrix $\Wmx$. Focusing on the term $\bvx^\T \Wmx^{-1} \bvx$, we can think of
$\Wmx$ as the covariance matrix of some equivalent noise. Indeed, if we approximate the
multiplicative phase noise as an additive perturbation, then the perturbation is a self-interference
that depends on the input vector $\xv$. This perturbation is not isotropic nor circularly symmetric, and can be captured by the covariance matrix $\Wmx$. Some more
discussions in this regard will be given in the following subsection. %This is an important observation that can help us re 
\end{remark}

While the proposed approximation simplifies significantly the objective function, 
the optimization problem~\eqref{eq:MLapp} remains hard when the search space is large. For instance,
with $64$-QAM and $4\times 4$ MIMO, the number of points in $\mathcal{X}^\nt$ is more than $10^7$!
Therefore, we need further simplification by exploiting the structure of the problem.

\subsection{The Self-Interference Whitening Algorithm}
\label{sec:algo}

As mentioned above, the difficulty of the optimization~\eqref{eq:MLapp} is mainly due to the presence of the
matrix $\Wmx$ that depends on $\xv$. Let us first assume that the $\Wmx$ corresponding to the optimal
solution $\hat{\xv}_{\mathrm{aML}}$ were somehow known, and is denoted by $\Wmxhat$.  Then the optimization
problem \eqref{eq:MLapp} would be equivalent to 
\begin{align}
  \xMLapp(\yv,\Hm) &= \arg \min_{\pmb{x} \in \mathcal{X}^{\nt}} \left\{ \SNR\, \bvx^\T \Wmxhat^{-1} \bvx + \frac{1}{2} \ln\det\left( \Wmxhat \right) \right\} \\
  &= \arg \min_{\pmb{x} \in \mathcal{X}^{\nt}} \bvx^\T \Wmxhat^{-1} \bvx  \\
  &= \arg \min_{\pmb{x} \in \mathcal{X}^{\nt}}  \| \Wmxhat^{-\frac{1}{2}} \tilde{\yv} - \Wmxhat^{-\frac{1}{2}}
  \tilde{\Hm} \tilde{\xv} \|^2, 
  \label{eq:MLapp2} 
\end{align}
where $\Wmxhat^{-\frac{1}{2}}$ is any matrix such that
$\left(\Wmxhat^{-\frac{1}{2}}\right)^\H\Wmxhat^{-\frac{1}{2}} = \Wmxhat^{-1}$; 
$\tilde{\xv}$, $\tilde{\yv}$, and $\tilde{\Hm}$ are defined by
\begin{align}
  \tilde{\xv} &:= \begin{bmatrix} \Re[\xv] \\ \Im[\xv] \end{bmatrix}, \quad
    \tilde{\yv} := \begin{bmatrix} \Re[\yv] \\ \Im[\yv] \end{bmatrix}, \quad
      \tilde{\Hm} := \begin{bmatrix} \Re[\Hm] & -\Im[\Hm] \\ \Im[\Hm] & \Re[\Hm] \end{bmatrix}. 
        \label{eq:xyH}
\end{align}%
Note that for a given $\Wmxhat$, \eqref{eq:MLapp2} can be solved efficiently with any \NND~algorithm.
Unfortunately, without knowing the optimal solution $\hat{\xv}_{\mathrm{aML}}$, the exact $\Wmxhat$
cannot be found.  Therefore, the idea is to first estimate the matrix $\Wmxhat$ with some suboptimal solution
$\hat{\xv}$, and then solve the optimization problem~\eqref{eq:MLapp2} with a \NND. 
We call this two-step procedure \emph{self-interference whitening}~(SIW). 
For instance, we can use the naive ML solution $\xMLnaive$ as the initial estimate to obtain $\Wmxhat$, and have
\begin{align}
  \xMLapp'(\yv,\Hm)
  &= \arg \min_{\pmb{x} \in \mathcal{X}^{\nt}}  \| \Wmxhatnaive^{-\frac{1}{2}} \tilde{\yv} -
  \Wmxhatnaive^{-\frac{1}{2}} \tilde{\Hm} \tilde{\xv} \|^2. 
\end{align}%

\begin{remark}\label{remark:2}
  The intuition behind the SIW scheme is as follows.
  From the definition of $\Wmx$ in \eqref{eq:Wmx} and $\Amx$ in \eqref{eq:Aandb}, 
  we see that $\Wmx$ depends on $\xv$ only through $\Hm \Dm_x$. First, the column space of $\Hm
  \Dm_x$ does not vary with $\xv$ since $\Dm_x$ is diagonal. Second, a small perturbation of
  $\xv$ does not perturb $\Wmx$ too much in the Euclidean space. 
  Since the naive ML point $\xMLnaive$ is close to the actual point $\xv$ in the column space of
  $\Hm$, it provides an accurate estimate of $\Wmx$. This can also be
  observed on Figure~\ref{fig:constellation64}, where we see that the ellipsoid-like dashed lines have
  similar sizes and orientations for constellation points that are close to each other.  
\end{remark}
\begin{remark}
Another possible initial estimate is the naive linear minimum mean square error~(LMMSE) solution. As the naive
ML, the naive LMMSE ignores the phase noise and returns
\begin{align}
  \xMMSE^0(\yv, \Hm) &:= \arg\min_{\xv\in\mathcal{X}^{\nt}} \| \Hm^H (\SNR^{-1} \Id + \Hm \Hm^H)^{-1} \yv -
  \xv \|^2.  \label{eq:lmmse}
\end{align}%
It is worth mentioning that in the presence of phase noise the naive LMMSE is not necessarily dominated by the naive ML solution, as
will shown in the numerical experiments of Section~\ref{sec:examples}. 
\end{remark}

The main algorithm of this work is described in Algorithm~\ref{algo:2}. 
\begin{algorithm*}[!h]
\caption{Self-interference whitening}
\label{algo:2}
\begin{algorithmic}
  \State \underline{Input}: $\yv$, $\Hm$, $\SNR$, $\Qm_\theta$
  \State Find $\xMMSE^0$ from \eqref{eq:lmmse} 
  \State Find $\xMLnaive \gets \mathrm{\NND}(\yv, \Hm, \mathcal{X})$ 
  \If{ $\allh(\xMMSE^0,\yv,\Hm,\SNR,\Qm_\theta) > \allh(\xMLnaive,\yv,\Hm,\SNR,\Qm_\theta)$}
  \State $\hat{\xv} \gets \xMMSE^0$
  \Else
  \State $\hat{\xv} \gets \xMLnaive$
  \EndIf
  \State Generate $\Wmxhat$ from $\hat{\xv}$ using \eqref{eq:Aandb} and \eqref{eq:Wmx} 
  \State Find $\Wmxhat^{\frac{1}{2}}$ using the Cholesky decomposition
  \State Generate $\tilde{\yv}$ and $\tilde{\Hm}$ according to \eqref{eq:xyH} 
  \State $\tilde{\xv}' \gets \mathrm{real\NND}(\Wmxhat^{-\frac{1}{2}} \tilde{\yv},
  \Wmxhat^{-\frac{1}{2}} \tilde{\Hm} \tilde{\xv}, \tilde{\mathcal{X}})$
  \State $\hat{\xv}' \gets \mathrm{complex}(\tilde{\xv}')$
  \If{ $\allh(\hat{\xv}',\yv,\Hm,\SNR,\Qm_\theta) > \allh(\hat{\xv},\yv,\Hm,\SNR,\Qm_\theta)$}
  \State {$\xMLapp' \gets \hat{\xv}'$}
  \Else
  \State {$\xMLapp' \gets \hat{\xv}$}
  \EndIf
  \State \underline{Output}: $\xMLapp^{(2)}$
\end{algorithmic}
\end{algorithm*}
In the algorithm, the complex function $\mathrm{\NND}({\yv}, {\Hm},
{\mathcal{X}})$ finds among the points from the alphabet ${\mathcal{X}}$ the closest one
to ${\yv}$ in the column space of ${\Hm}$; the  function $\mathrm{real\NND}(\tilde{\yv},
\tilde{\Hm}, \tilde{\mathcal{X}})$ is the real counterpart of $\mathrm{\NND}$. The function
``$\mathrm{complex}(\tilde{\xv}')$'' embeds the real vector $\tilde{\xv}'$ to the complex space by taking the upper half as the real part and the lower half as the imaginary part. 
It is worth noting that the newly obtained point is accepted only when it has a higher approximate likelihood
value than the naive ML point does. An example of the scalar case is provided in Figure~\ref{fig:example} where
$256$-QAM is used. The transmitted point is $x$ and the received point is $y$. The solid line is the level set
of the likelihood function. If the likelihood function was computed for each point in the constellation, then
one would recover $x$ from $y$ successfully. But this would be hard computationally. With the Euclidean
detection, $\hat{x}$ that is closer to $y$ than $x$ is would be found instead, which would cause an erroneous
detection. The SIW algorithm can ``correct'' the error as follows. First, to estimate the unknown matrix $\Wmx$, we compute the matrix $\Wmxhat$
which is represented by the red dashed ellipse around $\hat{x}$. 
We can see that the estimate $\Wmxhat$ is very close to the correct value $\Wmx$, given by the actual $x$ (blue dashed line). Then, we generate the coordinate system with $\Wmxhat$ and search for the closest constellation point to
$y$ in this coordinate system. In this example, $x$ can be recovered successfully. More importantly,
computationally efficient NND algorithms can be used to perform the search.

\begin{remark}
  The complexity of the SIW algorithm is essentially twice that of the \NND~algorithm used, since the other
  operations including the LMMSE detection have at most cubic complexity with respect to the dimension of the
  channel. The complexity of the \NND~algorithm depends directly on the conditioning of the given matrix. If
  the columns are close to orthogonal, then channel inversion is almost optimal. However, in the worse case, when the matrix is
  ill-conditioned, the \NND~algorithm can be slow and its complexity is exponential in the problem dimension. As mentioned earlier,
  there exist approximate~\NND~algorithms, e.g., based on lattice reduction, that can achieve near optimal performance with much lower complexity. 
\end{remark}

\begin{figure}
  \centering
\includegraphics[width=0.7\textwidth]{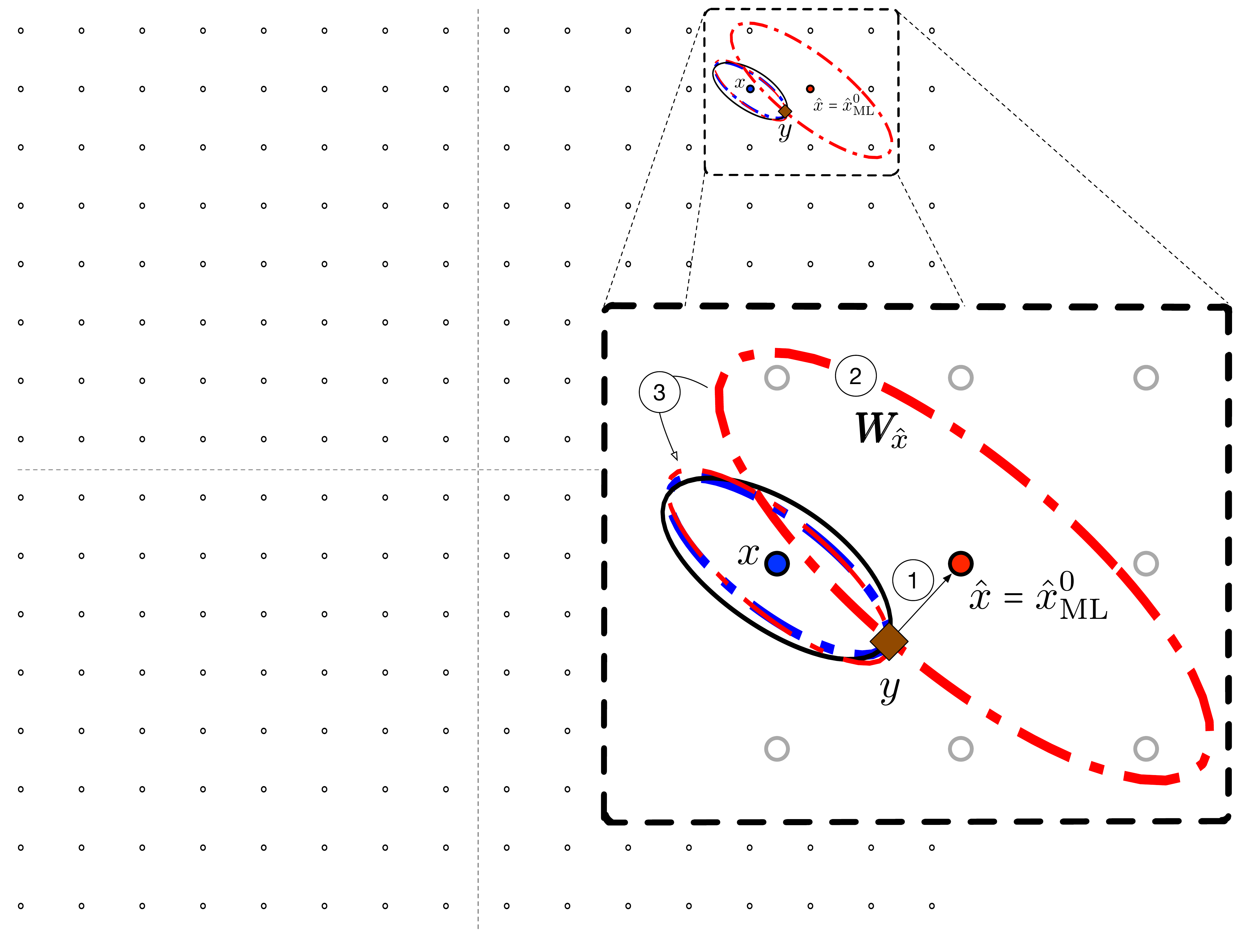}
\caption{Illustration of the proposed detection in the scalar case. An example with 256-QAM, PN~$2^\circ$. The dashed
lines represent the ellipse defined by the matrix $\Wm_{\!\!\hat{x}}$~(in red) and $\Wm_{\!\!{x}}$~(in blue). }
\label{fig:example}
\end{figure}

\section{Hardness of ML Decoding}
\label{sec:hardness}

	In the section we explore how ML decoding may be implemented. While in most cases of interest the SIW algorithm gives near-optimal performance, and ML decoding is too costly computationally, it is useful to simply provide lower bounds on the performance of any decoding algorithm. We comment on the difficulty of implementing ML decoding, in particular when the dimensions $\nr,\nt$ are large. 
        For simplicity, we assume that $\{\rV{\Theta}_{\text{t},k}\}_{k=1,\ldots,\nt}$ are
        i.i.d.~$\mathcal{N}(0,\sigma_t^2)$ and $\{\rV{\Theta}_{\text{r},k}\}_{k=1,\ldots,\nr}$ are
        i.i.d.~$\mathcal{N}(0,\sigma_r^2)$.

\subsection{Hardness of computing the likelihood} 

To compute the likelihood one needs to compute $\E_{\rvVec{\Theta}} \left[  e^{- \SNR {\| \pmb{y} - \pmb{H}_{\!\rvVec{\Theta}} \, \pmb{x} \|^2} } \right]$. For large dimensions, this seems impossible to do in closed form or using numerical integration. However a natural alternative is to use the Monte-Carlo method, using the following estimate:
 \begin{equation}
 \hat \llh(\xv, \yv, \Hm, \SNR, \Qm_\theta) \approx \hat{\rV{F}}_s := \ln\left( {1 \over s} \sum_{t=1}^s e^{- \SNR {\| \pmb{y} - \pmb{H}_{\!\rvVec{\Theta}^{(t)}} \, \pmb{x} \|^2} } \right),
 \end{equation}
where $\rvVec{\Theta}^{(1)},\dots,\rvVec{\Theta}^{(s)}$ are $s$ i.i.d.~copies of $\rvVec{\Theta}$. 
Using the delta method~\cite{Oehlert92} and the central limit theorem, the asymptotic variance of this estimator is 
\begin{equation}
  \Var( \hat{\rV{F}}_s) \sim {1 \over s} {\Var\bigl(e^{- \SNR {\| \pmb{y} -
\pmb{H}_{\!\rvVec{\Theta}} \pmb{x} \|^2}}\bigr) \over \Bigl( \E[e^{- \SNR {\| \pmb{y} -
\pmb{H}_{\!\rvVec{\Theta}} \pmb{x} \|^2} }]\Bigr)^2}   \; , \quad s \to \infty. \label{eq:asymp_var} 
\end{equation}
Let us compute this quantity in the (arguably easiest) case where $\Hm$ is the identity matrix
with $\nt=\nr=n$:
\begin{equation}
	\| \pmb{y} - \pmb{H}_{\!\rvVec{\Theta}} \, \pmb{x} \|^2 = \sum_{k=1}^{n} \left|y_k -
        x_{k} e^{j(\rV{\Theta}_{r,k} + \rV{\Theta}_{t,k})}\right|^2.  
\end{equation}
	Define the one-dimensional likelihood 
        \begin{equation}
        g(x,y,\SNR,\sigma) := \E[e^{-\SNR|y - x e^{j(\rV{\Theta}_{r} + \rV{\Theta}_{t})}|^2}],
        \end{equation}
	and using independence we calculate the moments:
        \begin{align}
		\E[e^{- \SNR {\| \pmb{y} - \pmb{H}_{\!\rvVec{\Theta}} \, \pmb{x} \|^2} }] &=
                \prod_{k=1}^{n} g(x_k,y_k,\SNR,\sigma), \\
		\Var(e^{- \SNR {\| \pmb{y} - \pmb{H}_{\!\rvVec{\Theta}} \, \pmb{x} \|^2} }) &= \prod_{k=1}^{n} g(x_k,y_k,2\SNR,\sigma) - \prod_{k=1}^{n} g(x_k,y_k,\SNR,\sigma)^2.
        \end{align}%
        From \eqref{eq:asymp_var}, the asmyptotic variance is hence:
        \begin{align}
          \Var( \hat{\rV{F}}_s)  =  {1 \over s} \left(\prod_{k=1}^{n} v(x_k,y_k,\SNR,\sigma) -
          1\right). \label{eq:asymp_var2} 
        \end{align}%
         where $v(x,y,\SNR,\sigma) := {g(x,y,2\SNR,\sigma) \over g(x,y,\SNR,\sigma)^2}$. It is
         noted that, unless $\sigma^2_r + \sigma^2_t = 0$ or $x = 0$, or $y = 0$, the random variable $e^{-\SNR|y - x
         e^{j(\rV{\Theta}_{r} + \rV{\Theta}_{t})}|^2}$ cannot be a constant.
	Thus, we have $v(x_k,y_k,\SNR,\sigma) > 1$ for $k=1,\ldots,n$. As a result, the asymptotic
        error \eqref{eq:asymp_var2} grows exponentially with the dimension $n$, so that the
        Monte-Carlo method is infeasible in high dimensions, since the number of samples must
        also scale exponentially with $n$ to maintain a constant error. 
	
        For instance, consider $x_k=y_k=1$ for all $k$, $\SNR = 40 \textrm{dB}$, 
        $\nr=\nt=20$, and $\sigma_r = \sigma_t = 3$ degrees. Then $v \approx 7$,
        so that, to obtain an error smaller than $0.1$, one would require $s \approx 10^{18}$
        samples, which is clearly not feasible in practice. 
	
\subsection{Hardness of maximum likelihood search} 

Assume that one is able to estimate the value of the likelihood with high accuracy (as
seen above this is typically hard), and denote by $\bar{\llh}$ this value. We may then
consider the following algorithm. Given the received symbol $\yv$, and a radius $\rho$, one first
computes the set of points $\mathcal{S} := \{\pmb{x} \in \mathcal{X}^{\nt}:  \| \pmb{y} -
\pmb{H} \pmb{x} \|^2 \le \rho^2\}$, then one computes the value of $\bar{\llh}$ for each of those points and returns the point maximizing $\bar{\llh}$. In the large system limit, the following concentration phenomenon occurs.

\begin{proposition}\label{th:radius}
	Assume that $\rvMat{H}$ has i.i.d.~entries with distribution $\mathcal{CN}(0,1)$ and
        $\rvVec{X}$ is chosen uniformly at random from ${\mathcal X}^{\nt}$. Define the radius
        $\rV{R}^2 :=  \| \rvVec{Y}   -  \rvMat{H} \rvVec{X} \|^2$. Then we have:
	\begin{equation}
          \E[\rV{R}^2] = 2 \nt \nr \Bigl(1 - e^{-{\sigma_r^2 + \sigma_t^2 \over 2}} \Bigr) +
          \SNR^{-1}\,\nr,
	\end{equation}
	and for any $\eta > 0$ we have:
	\begin{equation}
          \PP\Bigl\{ (1-\eta) \E[\rV{R}^2] \le \rV{R}^2 \le (1+\eta) \E[\rV{R}^2]\Bigr\} \to 1,
          \quad  \nt,\nr \to \infty.
	\end{equation}
\end{proposition}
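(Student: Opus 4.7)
The plan is to compute $\E[\rV{R}^2]$ by a direct expansion exploiting independence, and to establish the concentration statement via Chebyshev's inequality after bounding $\Var(\rV{R}^2)$ using the law of total variance. I would start by writing $\rV{R}^2 = \|\rvVec{V}+\rvVec{Z}\|^2$ with $\rvVec{V} := (\rvLambdaR \Hm \rvLambdaT - \rvMat{H})\rvVec{X}$. Since $\rvVec{Z}$ is independent of $(\rvMat{H},\rvVec{\Theta},\rvVec{X})$ with zero mean, the cross term vanishes and $\E[\rV{R}^2] = \E\|\rvVec{V}\|^2 + \nr\SNR^{-1}$. The $k$-th component is $\rV{V}_k = \sum_l \rV{H}_{kl}(e^{j(\rV{\Theta}_{r,k}+\rV{\Theta}_{t,l})}-1)\rV{X}_l$; independence of $\{\rV{H}_{kl}\}_l$ with unit variance annihilates the off-diagonal $(l,l')$ contributions to $|\rV{V}_k|^2$, leaving $\E|\rV{V}_k|^2 = \sum_l \E|e^{j(\rV{\Theta}_{r,k}+\rV{\Theta}_{t,l})}-1|^2\,\E|\rV{X}_l|^2$. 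The Gaussian characteristic function gives $\E[\cos\phi]=e^{-\sigma^2/2}$ for $\phi\sim\mathcal{N}(0,\sigma^2)$, so each summand equals $2(1-e^{-(\sigma_r^2+\sigma_t^2)/2})$. Summing over $k,l$ with $\E|\rV{X}_l|^2 = 1$ delivers the announced expression for $\E[\rV{R}^2]$.

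For the concentration, Chebyshev's inequality reduces the claim to showing $\Var(\rV{R}^2) = o(\E[\rV{R}^2]^2)$. The key trick is to condition on $(\rvVec{\Theta},\rvVec{X})$: each $\rV{V}_k+\rV{Z}_k$ is then a centered complex Gaussian (being linear in the conditionally-independent $\{\rV{H}_{kl}\}_l$ and $\rV{Z}_k$) with variance
\begin{equation}
  \rV{S}_k := \sum_l \bigl|e^{j(\rV{\Theta}_{r,k}+\rV{\Theta}_{t,l})}-1\bigr|^2 |\rV{X}_l|^2 + \SNR^{-1},
\end{equation}
and the family $\{\rV{V}_k+\rV{Z}_k\}_k$ is conditionally independent across $k$. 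Thus $|\rV{V}_k+\rV{Z}_k|^2 \mid \rvVec{\Theta},\rvVec{X}$ are independent exponentials and the law of total variance gives $\Var(\rV{R}^2) = \E[\sum_k \rV{S}_k^2] + \Var(\sum_k \rV{S}_k)$. Since $|e^{j\phi}-1|^2 \le 4$ and the QAM symbols are bounded, $\rV{S}_k = O(\nt)$ deterministically and the first piece is $O(\nr\nt^2)$. For the second, I would expand $\sum_k \rV{S}_k = \sum_{k,l}\rV{A}_{kl} + \nr\SNR^{-1}$ with $\rV{A}_{kl} := |e^{j(\rV{\Theta}_{r,k}+\rV{\Theta}_{t,l})}-1|^2|\rV{X}_l|^2$, and observe that independence of the coordinates of $\rvVec{\Theta}$ and of $\rvVec{X}$ forces $\mathsf{Cov}(\rV{A}_{kl},\rV{A}_{k'l'}) = 0$ whenever both $k\ne k'$ and $l\ne l'$, while each of the remaining covariances is $O(1)$. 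Counting the pairs sharing at least one index gives $\Var(\sum_k \rV{S}_k) = O(\nt\nr(\nt+\nr))$, hence $\Var(\rV{R}^2) = O(\nt\nr(\nt+\nr))$. Since $\E[\rV{R}^2] = \Theta(\nt\nr)$ whenever $\sigma_r^2+\sigma_t^2 > 0$, the ratio is $O(\nt^{-1}+\nr^{-1})\to 0$ and Chebyshev closes the argument; the trivial case $\sigma_r^2+\sigma_t^2 = 0$ reduces to chi-square concentration of $\|\rvVec{Z}\|^2$.

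The main obstacle is the covariance bookkeeping for $\Var(\sum_k \rV{S}_k)$: one has to separate the regimes where only $k$ matches, only $l$ matches, or both match, and confirm that the dominant contribution $O(\nt\nr(\nt+\nr))$ stays subleading relative to $\E[\rV{R}^2]^2 = \Theta(\nt^2\nr^2)$. The remaining steps are either routine algebra (the mean computation) or standard conditional-Gaussian identities. I would also ensure that the $O(\cdot)$ bounds are uniform in the conditioning variables so that the argument collapses cleanly under the outer expectation and variance operators.
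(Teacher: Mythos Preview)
Your proposal is correct and follows essentially the same route as the paper: the same decomposition $\rvVec{V} := (\rvLambdaR\rvMat{H}\rvLambdaT-\rvMat{H})\rvVec{X}$, the same Gaussian characteristic-function identity for the mean, the same conditioning on $(\rvVec{\Theta},\rvVec{X})$ so that the squared components become independent exponentials, the same covariance bookkeeping yielding $\Var(\rV{R}^2)=O(\nt\nr(\nt+\nr))$, and the same Chebyshev conclusion. The only cosmetic difference is that you fold $\rvVec{Z}$ into the conditional Gaussian at once, whereas the paper first separates $\|\rvVec{V}\|^2$ and $\|\rvVec{Z}\|^2$ and handles the cross terms explicitly; your packaging is slightly cleaner but the substance is identical.
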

Note that the above result is general and does not impose that the number of antennas $\nr,\nt$ scale at the same speed. We draw two conclusions from this result: (i) Any such algorithm applied with radius $(1 + \eta) \sqrt{\E[R^2]}$ for any $\eta > 0$ is guaranteed to inspect the optimal point with high probability. (ii) Any such algorithm which has a large success probability needs to inspect every point in a sphere of radius ${O}\left( \sqrt{\nt \nr} \sqrt{1 - \exp\left(-{\sigma_r^2 + \sigma_t^2 \over 2}\right)}\right)$, and therefore typically has a very high complexity. 

From the above analysis we see that the second difficulty of the decoding problem for phase noise channels, even when the likeihood can be computed, lies in the number of points to be inspected which is exponentially large in $\nr,\nt$. The problem is that computing the likelihood at any given point does not give us any information about the structure of $\llh$, and does not help in maximizing $\llh$ efficiently. 

\subsection{Non-concavity in the high SNR regime} 

Indeed maximizing $\llh$ is difficult, and it seems that even performing zero-forcing, i.e.,
maximizing $\llh$ over $x \in \mathbb{C}^{\nt}$ rather than over $\xv \in {\cal X}^{\nt}$, is
difficult since $\llh$ is non-concave, at least in the high SNR regime.  We first show that, in
the high SNR regime, the log-likelihood can be approximated by a function of the minimal value of
$\| \pmb{y} - \pmb{H}_{\!{\thetav}}\, \pmb{x} \|^2$ , where the minimum is taken over all possible
phases ${\thetav}$.
\begin{lemma}\label{prop:highsnr}
	For all $\xv ,\yv, \Hm, \Qm_\theta$, we have the following high SNR behavior
        \begin{equation}
		\lim_{\SNR \to \infty} -{1 \over \SNR} \llh(\xv ,\yv, \Hm, \SNR, \Qm_\theta) =
                m(\xv ,\yv, \Hm) :=  \min_{\thetav \in \mathbb{R}^{\nt + \nr}} \| \pmb{y} -
                \pmb{H}_{\!\thetav}\, \pmb{x} \|^2.
        \end{equation}%
\end{lemma}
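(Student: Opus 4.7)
The plan is a standard Laplace/Varadhan-type argument: the exponential integral is dominated by the minimizer of the exponent when $\SNR \to \infty$.

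Let $m := m(\xv,\yv,\Hm) = \min_{\thetav}\| \yv - \Hm_\thetav \xv\|^2$. The minimum exists because $\thetav \mapsto \Hm_\thetav$ depends on $\thetav$ only through the $2\pi$-periodic functions $e^{j\thetav_{\text{t},l}}$ and $e^{j\thetav_{\text{r},k}}$, so $\|\yv - \Hm_\thetav \xv\|^2$ is continuous on the compact torus $(\mathbb{R}/2\pi\mathbb{Z})^{\nt+\nr}$, and a minimizer $\thetav^\star$ is attained.

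For the upper bound on the likelihood, I would simply use $\|\yv - \Hm_\thetav \xv\|^2 \ge m$ for every $\thetav$, giving
\begin{equation}
  \E_{\rvVec{\Theta}}\bigl[e^{-\SNR\|\yv - \Hm_{\rvVec{\Theta}}\xv\|^2}\bigr] \le e^{-\SNR m},
\end{equation}
hence $-\frac{1}{\SNR}\llh \ge m$ for every $\SNR > 0$, yielding $\liminf_{\SNR\to\infty}\bigl(-\frac{1}{\SNR}\llh\bigr) \ge m$.

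For the matching upper bound on $-\frac{1}{\SNR}\llh$, I would localize the integral around $\thetav^\star$. By continuity of $\|\yv - \Hm_\thetav \xv\|^2$, for any $\epsilon > 0$ there is an open ball $B_\epsilon(\thetav^\star)$ on which $\|\yv - \Hm_\thetav \xv\|^2 \le m + \epsilon$. Since $\rvVec{\Theta} \sim \mathcal{N}(0,\Qm_\theta)$ has a density that is strictly positive everywhere (assuming $\Qm_\theta$ is positive definite; if $\Qm_\theta$ is only positive semidefinite one replaces $\thetav^\star$ by its projection onto the support, which still gives $\|\yv - \Hm_\thetav \xv\|^2$ in a neighborhood of $m$ by continuity), the probability $p_\epsilon := \PP[\rvVec{\Theta} \in B_\epsilon(\thetav^\star)]$ is strictly positive. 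Then
\begin{equation}
  \E_{\rvVec{\Theta}}\bigl[e^{-\SNR\|\yv - \Hm_{\rvVec{\Theta}}\xv\|^2}\bigr] \ge e^{-\SNR(m+\epsilon)}\, p_\epsilon,
\end{equation}
and taking $-\frac{1}{\SNR}\ln(\cdot)$ gives $-\frac{1}{\SNR}\llh \le m + \epsilon - \frac{1}{\SNR}\ln p_\epsilon$. Letting $\SNR \to \infty$ with $\epsilon$ fixed yields $\limsup_{\SNR\to\infty}\bigl(-\frac{1}{\SNR}\llh\bigr) \le m + \epsilon$, and then $\epsilon \downarrow 0$ closes the argument.

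The only mildly delicate point is handling a possibly degenerate $\Qm_\theta$; the rest is routine. If $\Qm_\theta$ is singular then $\rvVec{\Theta}$ is supported on an affine subspace $V$ and the minimum in the statement should implicitly be over $V$ (or the claim should be interpreted accordingly); in that case one replaces $\thetav^\star$ by a minimizer over $V$ and uses that the Gaussian density on $V$ is strictly positive on $V$, so $p_\epsilon > 0$ for the relative neighborhood. No other obstacle arises.
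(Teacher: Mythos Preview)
Your argument is correct and is essentially identical to the paper's proof: both sandwich the log-expectation between $-\SNR m$ from above and $-\SNR(m+\epsilon)+\ln p_\epsilon$ from below using continuity and strict positivity of the probability of the sublevel set, then send $\SNR\to\infty$ followed by $\epsilon\downarrow 0$. Your extra remarks on existence of the minimizer via compactness of the torus and on the degenerate-$\Qm_\theta$ case are useful refinements that the paper omits, but the underlying approach is the same.
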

\begin{proof}
Consider $\epsilon > 0$, we have that
\begin{equation}
  \indic\left\{  \| \pmb{y} - \pmb{H}_{\!\rvVec{\Theta}} \, \pmb{x} \|^2 \le m(\xv ,\yv, \Hm) +
  \epsilon \right\} e^{- \SNR (m(\xv ,\yv, \Hm) + \epsilon)}   \le e^{- \SNR \| \pmb{y} - \pmb{H}_{\!\rvVec{\Theta}} \pmb{x} \|^2} \le e^{- \SNR m(\xv ,\yv, \Hm)}.
\end{equation}
	Taking expectations and then logarithms:
        \begin{equation*}
		-(m(\xv ,\yv, \Hm)+\epsilon) + {1 \over \SNR} \ln \PP\Bigl\{\| \pmb{y} -
                \pmb{H}_{\!\rvVec{\Theta}} \pmb{x} \|^2 \le m(\xv ,\yv, \Hm) + \epsilon\Bigr\} \le
                {1 \over \SNR}  \llh(\xv ,\yv, \Hm, \SNR, \Qm_\theta) \le -m(\xv ,\yv, \Hm).
        \end{equation*}
	Since the mapping ${\thetav} \mapsto \| \pmb{y} - \pmb{H}_{\!{\thetav}}\, \pmb{x} \|^2$ is
        continuous, for any given $\epsilon>0$, the probability \newline $\PP\bigl\{\| \pmb{y} -
        \pmb{H}_{\!{\rvVec{\Theta}}}\, \pmb{x} \|^2 \le m(\xv ,\yv, \Hm) + \epsilon\bigr\}$ is
        strictly positive. Letting $\SNR \to \infty$, we have
	$$
		-(m(\xv ,\yv, \Hm)+\epsilon) \le \lim_{\SNR \to \infty} \inf {1 \over \SNR}  \llh(\xv ,\yv, \Hm, \SNR, \Qm_\theta) \le \lim_{\SNR \to \infty} \sup {1 \over \SNR}  \llh(\xv ,\yv, \Hm, \SNR, \Qm_\theta) \le -m(\xv ,\yv, \Hm).
	$$
	Since the above holds for all $\epsilon>0$, we have $\lim_{\SNR \to \infty} -{1 \over \SNR} \llh(\xv ,\yv, \Hm, \SNR, \Qm_\theta) = m(\xv ,\yv, \Hm)$.
\end{proof}

We can now show that in general the log-likelihood is not concave, hence maximizing it is not straightforward.  
\begin{proposition}\label{prop:non_concave}
For $\SNR$ large enough and $\Hm\ne\zerov$, there exists $\yv$ such that $\xv \mapsto \llh(\xv, \yv, \Hm, \SNR, \Qm_\theta)$ is a non-concave function.
\end{proposition}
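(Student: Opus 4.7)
My plan is to use Lemma~\ref{prop:highsnr} to transfer the question of non-concavity of $\llh$ at high SNR to the question of non-convexity of the limit $m(\xv, \yv, \Hm) = \min_{\thetav} \|\yv - \Hm_\thetav \xv\|^2$. That lemma gives the pointwise convergence $-\tfrac{1}{\SNR}\llh(\xv, \yv, \Hm, \SNR, \Qm_\theta) \to m(\xv, \yv, \Hm)$ as $\SNR \to \infty$. Hence it suffices to exhibit two points $\xv_1, \xv_2$ with midpoint $\xv_0 := \tfrac{1}{2}(\xv_1 + \xv_2)$ and a choice of $\yv$ satisfying the \emph{strict} inequality
\[
m(\xv_0, \yv, \Hm) > \tfrac{1}{2}\bigl(m(\xv_1, \yv, \Hm) + m(\xv_2, \yv, \Hm)\bigr).
\]
Applying the pointwise limit at these three specific points, the same strict inequality then holds for $-\llh$ once $\SNR$ is large enough, which is exactly the failure of midpoint-concavity of $\xv \mapsto \llh(\xv, \yv, \Hm, \SNR, \Qm_\theta)$.

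To build the witness I would use the assumption $\Hm \ne \zerov$: pick an index $l$ with $\hv_l := \Hm\ev_l \ne \zerov$ (an existing nonzero column), and take $\yv := \hv_l$ together with $\xv_1 := \ev_l$, $\xv_2 := -\ev_l$, $\xv_0 := \zerov$. At the midpoint, $\Hm_\thetav \zerov = \zerov$ regardless of $\thetav$, so $m(\zerov, \hv_l, \Hm) = \|\hv_l\|^2 > 0$. At each of $\pm \ev_l$, only the $l$-th entry of $\rvLambdaT \xv$ is nonzero and has unit modulus, so $\Hm_\thetav(\pm \ev_l) = \pm e^{j\theta_{t,l}} \rvLambdaR \hv_l$; choosing $\theta_{t,l} \in \{0,\pi\}$ to absorb the sign and leaving all receive phases at zero makes this quantity equal to $\hv_l = \yv$, yielding $m(\pm \ev_l, \hv_l, \Hm) = 0$. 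The strict inequality $\|\hv_l\|^2 > 0 = \tfrac{1}{2}(0+0)$ is then immediate.

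The only step that deserves a moment of attention is finding the right witness: one has to notice that the free receive and transmit phases allow $\yv$ to be perfectly matched by either $\ev_l$ or $-\ev_l$ when $\yv$ is aligned with a column of $\Hm$, while the midpoint $\zerov$ cannot benefit from any phase choice. Once this is observed, the computations of $m$ at the three points are essentially by inspection, and Lemma~\ref{prop:highsnr} delivers the transfer to $\llh$ with no uniformity concerns since only three points are involved.
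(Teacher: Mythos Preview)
Your proposal is correct and follows essentially the same route as the paper: both use Lemma~\ref{prop:highsnr} to pass to the limit $m(\xv,\yv,\Hm)$, then exhibit a pair $\xv$, $-\xv$ with midpoint $\zerov$ for which $m(\pm\xv,\yv,\Hm)=0$ while $m(\zerov,\yv,\Hm)=\|\yv\|^2>0$. The only difference is the specific witness: the paper picks any $\zv$ with $\Hm\zv\ne\zerov$, sets $\yv=\Hm\zv$ and $\xv=j(|z_1|,\dots,|z_{\nt}|)$, whereas you take a single nonzero column $\hv_l$ of $\Hm$ and $\xv=\ev_l$; your choice is a special case and arguably cleaner.
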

\begin{proof}
Assume that $\llh$ is concave, then for all $\xv$ we must have:
\begin{equation}
	{1 \over 2}\bigl(\llh(\xv ,\yv, \Hm, \SNR, \Qm_\theta) + \llh(\xv^* , \yv, \Hm,
        \SNR,\Qm_\theta)\bigr) \le \llh\left({\xv + \xv^{*} \over 2}, \yv, \Hm, \SNR, \Qm_\theta\right).
\end{equation}
From Lemma~\ref{prop:highsnr}, the above inequality implies that 
\begin{equation}
	{1 \over 2} \bigl(m(\xv ,\yv, \Hm) + m(\xv^* ,\yv, \Hm) \bigr) \ge m\left({\xv + \xv^{*} \over 2},
        \yv, \Hm\right). \label{eq:tmp898} 
\end{equation}
We shall construct an example to show that the above does not hold in general.
Consider any $\zv$ such that $\Hm \zv \ne \zerov$, and let $\yv = \Hm\zv$ and $\xv = j(|z_1|,...,|z_{\nt}|)$, so that $\xv^* = -\xv$. Then $\xv$
and $\zv$ are equal up to a phase transformation, so $m(\xv ,\yv, \Hm) = m(\zv ,\yv, \Hm) = 0$.
Similarly $m(\xv^* ,\yv, \Hm) = 0$. By definition $m({\xv + \xv^{*} \over 2}, \yv, \Hm) =
m(\zerov, \yv, \Hm) = \| \yv \|^2$. In this example, \eqref{eq:tmp898} would imply $0 \ge \| \yv
\|^2$, which is clearly a contradiction since $\yv \ne \zerov$. Hence $\llh$ cannot be concave for $\SNR$ large enough. 
\end{proof}

This fact that the log-likelihood is in general non-concave gives another important insight: SIW
can in fact be seen as a (well chosen) concave approximation of a non-concave function. To circumvent
the problem of non-concavity of the log-likelihood SIW approximates it by a function which, when
$W_{\xv}$ is fixed, is concave. As non-concavity appears mainly for high SNRs, the discrepancy
between the performance of ML and SIW (if any) should be more visible in the high SNR regime, and
this will be confirmed by our numerical experiments in the next section.

\section{Numerical Experiments}
\label{sec:examples}

In this section, we look at different communication scenarios in which we compare the proposed scheme to 
some baseline schemes, including two schemes that ignore the phase noise:
\begin{itemize}
  \item The naive LMMSE solution given by \eqref{eq:lmmse}
  \item The naive ML solution~\eqref{eq:ML0}
\end{itemize}
and a scheme that takes the phase noise into account:
\begin{itemize}
  \item Selection between naive LMMSE and ML: the receiver first finds the naive LMMSE and naive ML solutions, then computes the proposed approximate likelihood function and selects the one with higher value. 
\end{itemize}
Note that we focus on the vector detection error rate\footnote{The detection is considered successful only when
all the symbols in $\xv$ are recovered correctly. Otherwise an error is declared.} as our performance metric.

\subsection{Simulation-based lower bounds}

\newcommand{\PeML}{P_{\text{e}}^{\mathrm{ML}}}
\newcommand{\PeMLapp}{P_{\text{e}}^{\mathrm{aML}}}
In order to appreciate the performance of the proposed algorithm, we need to compare it not only with the
existing schemes, but also to the fundamental limit given by ML detection, which is optimal. 
Let us recall that the proposed SIW algorithm may suffer from two levels of suboptimality. First, 
the approximate likelihood function \eqref{eq:aML} may be inaccurate in some cases. Second, even if
\eqref{eq:aML} is accurate, the SIW algorithm is not guaranteed to find the optimal solution \eqref{eq:MLapp}.
Therefore, in order to identify the source of the potential suboptimality, it would be useful to compare the
performance of the SIW scheme with the performance given by \eqref{eq:aML} and with that given by \eqref{eq:ML}. 

Unfortunately, as pointed out earlier, finding \eqref{eq:aML} requires an exhaustive search with complexity
growing as $|\mathcal{X}|^{\nt}$. Finding \eqref{eq:ML} is even harder because of the numerical
multi-dimensional integration.
As such, we resort to lower bounds on the detection error for \eqref{eq:aML} and \eqref{eq:ML}, respectively,
which are enough for our purpose of benchmarking. To that end, we write
\begin{align}
  \PeMLapp &\ge \mathbb{P}\left\{ \allh(\rvVec{X}, \rvVec{Y}, \rvMat{H}) < \max_{\xv\in\mathcal{X}^{\nt}} \allh(\xv, \rvVec{Y},  \rvMat{H}) \right\} \label{eq:aMLLB0} \\
  &\ge \mathbb{P}\left\{ \allh(\rvVec{X}, \rvVec{Y}, \rvMat{H}) < \max_{\xv\in\mathcal{L}} \allh(\xv,
  \rvVec{Y},  \rvMat{H}) \right\} \label{eq:aMLLB},\quad \forall\,\mathcal{L}\subseteq \mathcal{X}^{\nt},
\end{align}%
where the first lower bound is from the definition of the detection criterion, namely, error occurs if there
exists at least one input vector that has a strictly higher approximate likelihood value. Note that although
the second inequality is valid for all $\mathcal{L}$, it becomes equality if $\mathcal{L}$ contains \emph{all}
the points in $\mathcal{X}^{\nt}$ that have a higher approximate likelihood value than $\rvVec{X}$ does. In
this work, we only take a large set around $\rvVec{X}$ to obtain the lower bound~\eqref{eq:aMLLB}, without any
theoretical guarantee of tightness of \eqref{eq:aMLLB}. Similarly, for ML detection, we have  
\begin{align}
  \PeML &\ge \mathbb{P}\left\{ \llh(\rvVec{X}, \rvVec{Y}, \rvMat{H}) < \max_{\xv\in\mathcal{X}^{\nt}} \llh(\xv, \rvVec{Y},  \rvMat{H})
  \right\} \label{eq:MLLB0} \\
  &\ge \mathbb{P}\left\{ \llh(\rvVec{X}, \rvVec{Y}, \rvMat{H}) < \llh(\rvVec{X}', \rvVec{Y},  \rvMat{H}) \right\}
  \label{eq:MLLB} \\
  &= \mathbb{P}\left\{ \rvVec{X} \ne \rvVec{X}', \  \llh(\rvVec{X}, \rvVec{Y}, \rvMat{H}) < \llh(\rvVec{X}', \rvVec{Y},  \rvMat{H})
  \right\}, \quad\forall\,\rvVec{X}'\in\mathcal{X}^{\nt}, \label{eq:MLLB2}
\end{align}%
where the first lower bound is from the definition of the ML detection criterion, namely, error occurs if there
exists at least one input vector that has a strictly higher likelihood value; the equality
\eqref{eq:MLLB2} holds since $\rvVec{X} \ne \rvVec{X}'$ is a consequence of $\llh(\rvVec{X}, \rvVec{Y}, \rvMat{H}) <
\llh(\rvVec{X}', \rvVec{Y},  \rvMat{H})$;
Note that the second
lower bound \eqref{eq:MLLB} holds for any vector $\rvVec{X}'$ from the alphabet $\mathcal{X}^{\nt}$ and with equality when
$\rvVec{X}'$ is the exact ML solution. Since the ML solution is unknown, we can use any suboptimal solution instead and still
obtain a valid lower bound. Now we can see that \eqref{eq:MLLB2} is much easier to evaluate than \eqref{eq:MLLB0} is since
there is no need to perform the maximization over $\mathcal{X}^{\nt}$. Intuitively, if $\rvVec{X}'$ is a near ML solution, then the lower bound should be tight enough. We shall have some more discussions on this with the upcoming numerical examples. The lower bound~\eqref{eq:MLLB2} can be obtained by simulation: 
\begin{enumerate}
  \item For a given observation $\yv$ and channel $\Hm$, find a suboptimal solution $\xv'$. 
  \item If $\xv' \ne \xv$, compute $\llh(\xv,\yv,\Hm)$ and $\llh(\xv',\yv,\Hm)$, count an ML error only when $\llh(\xv,\yv,\Hm)<\llh(\xv',\yv,\Hm)$; otherwise the counter remains unchanged.  
\end{enumerate}
With the proposed method, we need to perform twice the numerical integration~(e.g., Monte-Carlo
integration) only when $\xv'\ne\xv$. If the latter event happens with small probability, then the
average complexity to evaluate \eqref{eq:MLLB2} is low. In other words, using a $\xv'$ from a
better reference scheme not only makes the lower bound tighter but also makes it easier to
evaluate.

\subsection{Scenario~1: Point-to-point SISO channel} 

The first scenario focuses on the point-to-point Rayleigh fading single-antenna channels, also known as
single-input single-output~(SISO) channels. We consider three different modulation orders ($64$, $256$, and
$1024$) with correspondingly three values of phase noise strength~($3^\circ$, $2^\circ$, and $1^\circ$) in
terms of the standard deviation at both the transmitter and receiver sides. The idea is to assess the
performance of the proposed algorithm in different phase noise limited regimes. In the SISO case, we compare
the proposed scheme with the naive ML scheme which consists in a simple threshold detection for the real and
imaginary parts. Several remarks are in order. First, we see that ignoring the existance of phase noise incurs a significant performance loss. Second, if exhaustive search is done with the proposed likelihood approximation, then it achieves the ML performance. This can
be seen from the fact that the proposed simulation-based lower bound overlaps with the curve with exhaustive
search. This confirms the accuracy of the closed-form approximation \eqref{eq:aML} at least in the SISO case.
It is worth mentioning that the exact likelihood in this case can also be derived via the Tikhonov distribution
as shown in~\cite{Foschini-PN-constellation, Caire_PN} where the analytic expression involving the Bessel function has been provided.
Finally, more remarkably, the SIW algorithm almost achieves the ML performance without exhaustive search.

\begin{figure}[t]
  \begin{subfigure}[b]{0.329\textwidth}
    \includegraphics[width=\textwidth]{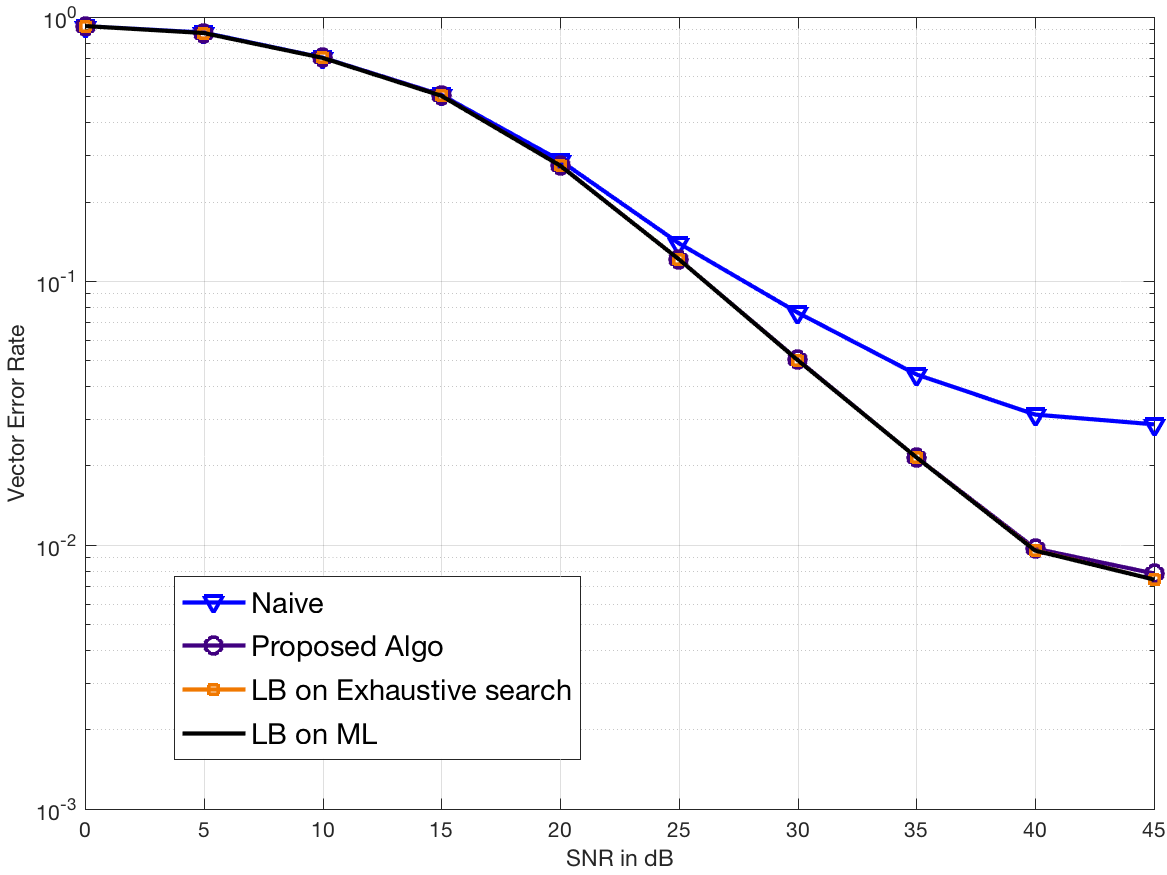}
    \caption{$64$-QAM, PN $3^\circ$.}
    \label{fig:SISO64}
  \end{subfigure}
  \begin{subfigure}[b]{0.329\textwidth}
    \includegraphics[width=\textwidth]{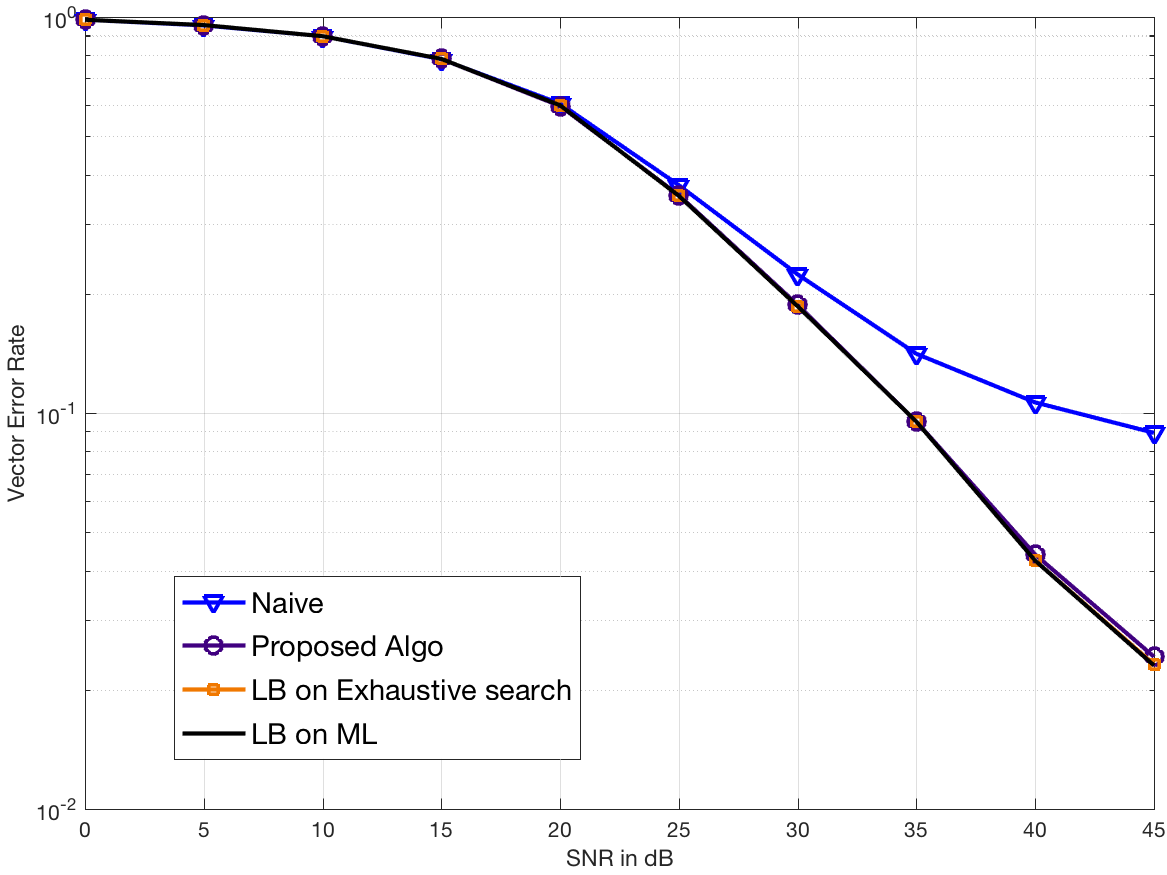}
    \caption{$256$-QAM, PN $2^\circ$.}
    \label{fig:SISO256}
  \end{subfigure}
  \begin{subfigure}[b]{0.329\textwidth}
    \includegraphics[width=\textwidth]{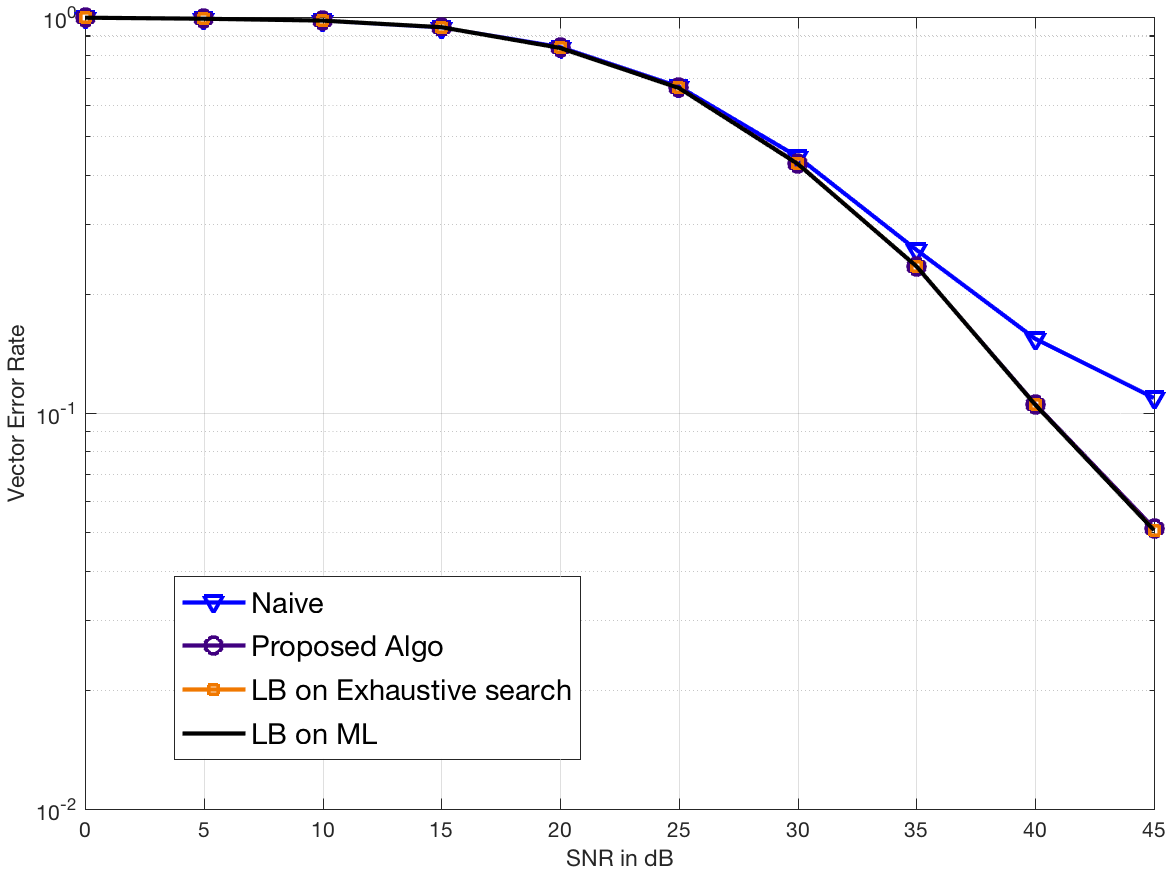}
    \caption{$1024$-QAM, PN $1^\circ$.}
    \label{fig:SISO1024}
  \end{subfigure}
  \caption{SISO Rayleigh fading with i.i.d.~phase noise.}
  \label{fig:SISO}
\end{figure}

\subsection{Scenario~2: Point-to-point LoS-MIMO channel} 

The second scenario is the point-to-point line-of-sight~(LoS) MIMO system commonly deployed as microwave
backhaul links~\cite{Bohagen,Durisi-capa,ferrand2016blind}. We assume that the channel is constant over time
but each antenna is driven by its own oscillator. This is the worst-case assumption but also often motivated by
the fact that the communication distance is large and thus the distance between antenna elements is increased
accordingly to make sure that the channel matrix is well conditioned~\cite{Bohagen,ferrand2016blind}. Here, we
adopt the model with two transmit and two receive antennas each one with dual polarizations. This is
effectively a $4\times4$ MIMO channel. The optimal distance between the antenna elements at each side can be
derived as a function of the communication distance~\cite{Bohagen}. However, it may not always be possible to
install the antennas with the optimal spacing due to practical constraints. The condition number of the channel
matrix increases when the antenna spacing decreases away from the optimal distance. In
Figure~\ref{fig:LoSMIMO}, we consider three configurations with distances $0.33$, $0.7$, and $1$ of the optimal
value, generated using the model from \cite{ferrand2016blind}. Accordingly, we use $64$-, $256$-, and $1024$-QAM. For simplicity, we do not consider
any precoding although it may further improve the performance as shown in \cite{ferrand2016blind}. We assume that the phase noises are i.i.d.~with
standard deviation of $1^\circ$. We make the following observations. First, as in the SISO case, phase noise mitigation substantially improves
performance. Also, the proposed likelihood approximation remains accurate as shown by the comparison between the exhaustive search \eqref{eq:MLapp}
and the lower bound on ML detection. Further, the proposed SIW algorithm follows closely the exhaustive search curve and hence achieves near ML
performance. Finally, although in the considered scenario the naive LMMSE is outperformed by the naive ML scheme, the selection between them can
provide a non-negligible gain as shown in Figure~\ref{fig:losmimo0.7}.

\begin{figure}[!t]
  \begin{subfigure}[b]{0.329\textwidth}
\includegraphics[width=\textwidth]{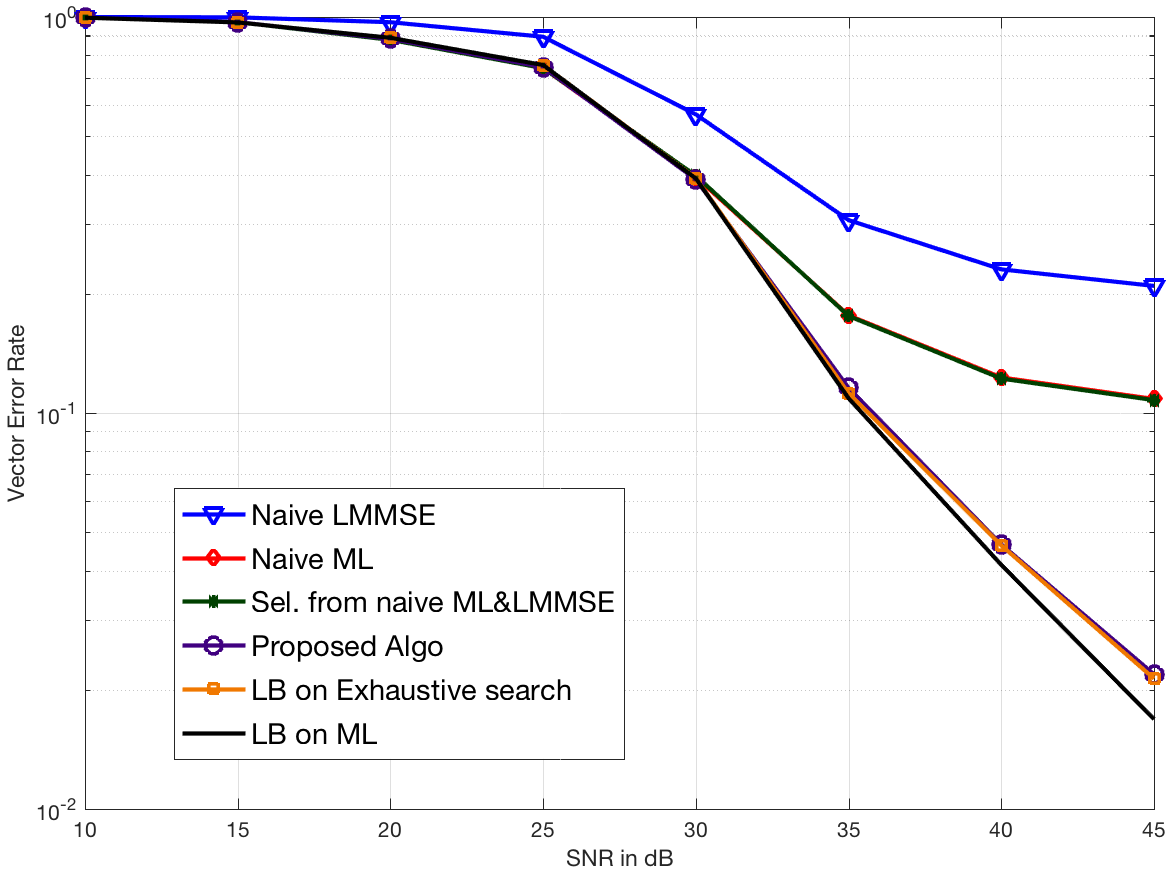}
\caption{0.33 Opt.~distance, $64$-QAM.}
  \end{subfigure}
  \label{fig:losmimo033}
 \begin{subfigure}[b]{0.329\textwidth}
\includegraphics[width=\textwidth]{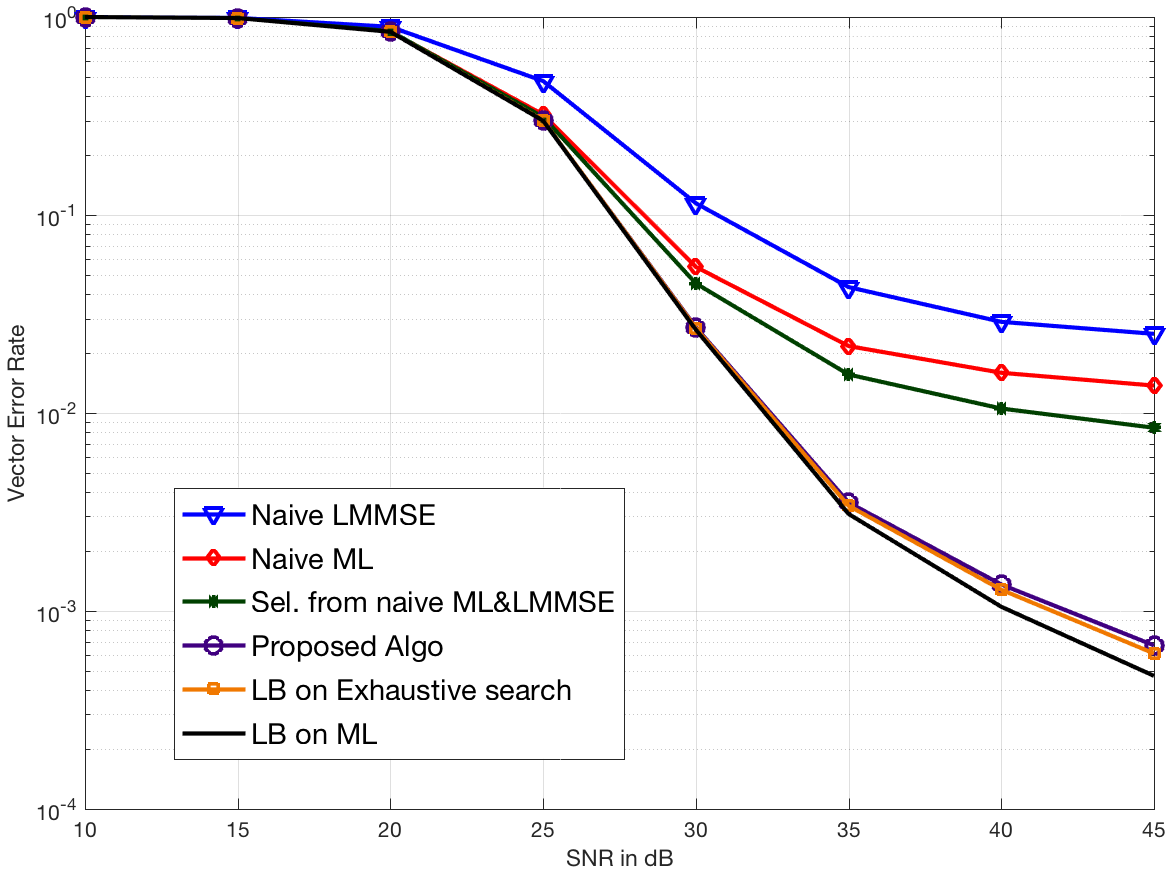}
\caption{0.7 Opt.~distance, $256$-QAM.}
\label{fig:losmimo0.7}
  \end{subfigure}
  \begin{subfigure}[b]{0.329\textwidth}
\includegraphics[width=\textwidth]{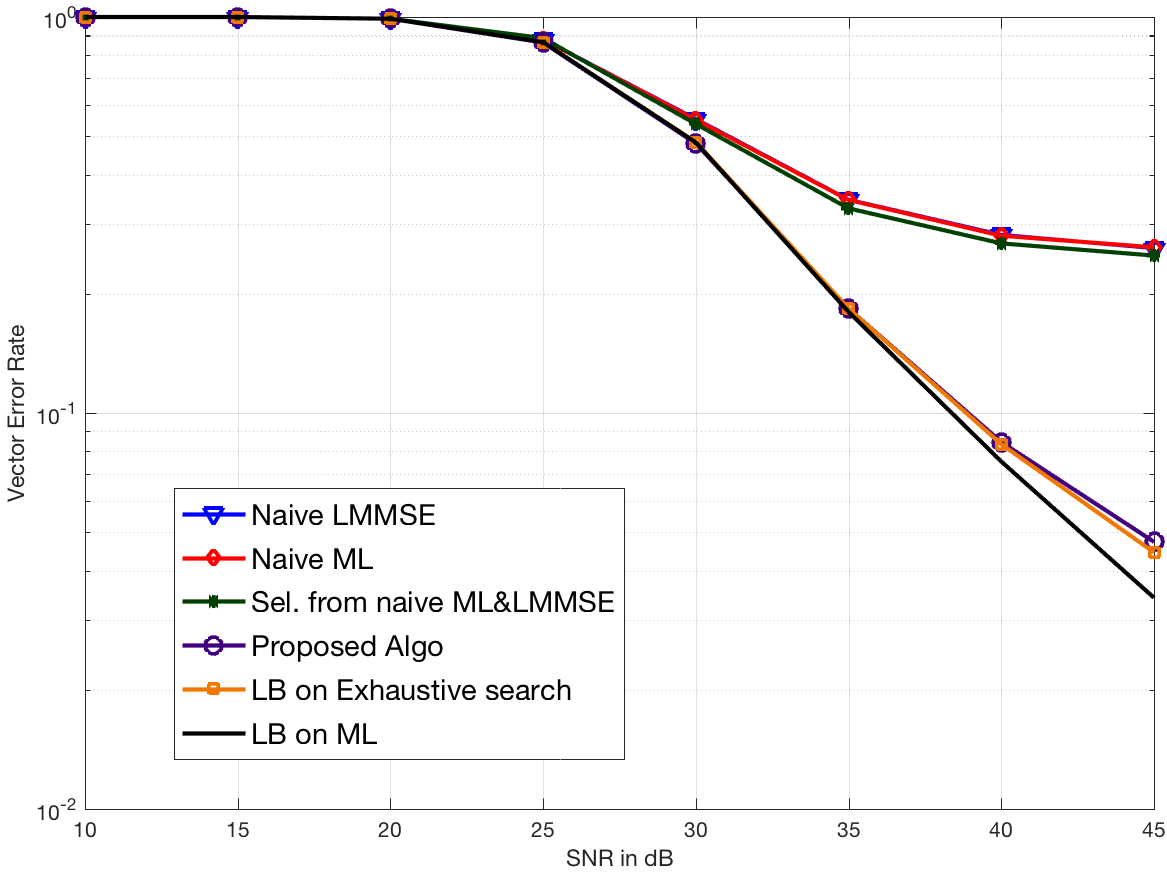}
\caption{Opt.~distance, $1024$-QAM.}
\label{fig:losmimo1}
  \end{subfigure}
 \caption{$4\times4$ LoS MIMO. Each antenna has i.i.d.~phase noise of standard deviation $1^\circ$.}
  \label{fig:LoSMIMO}
\end{figure}

\subsection{Scenario~3: Uplink SIMO channel with centralized receiver oscillator}

\begin{figure}[t]
  \begin{subfigure}[b]{0.329\textwidth}
\includegraphics[width=\textwidth]{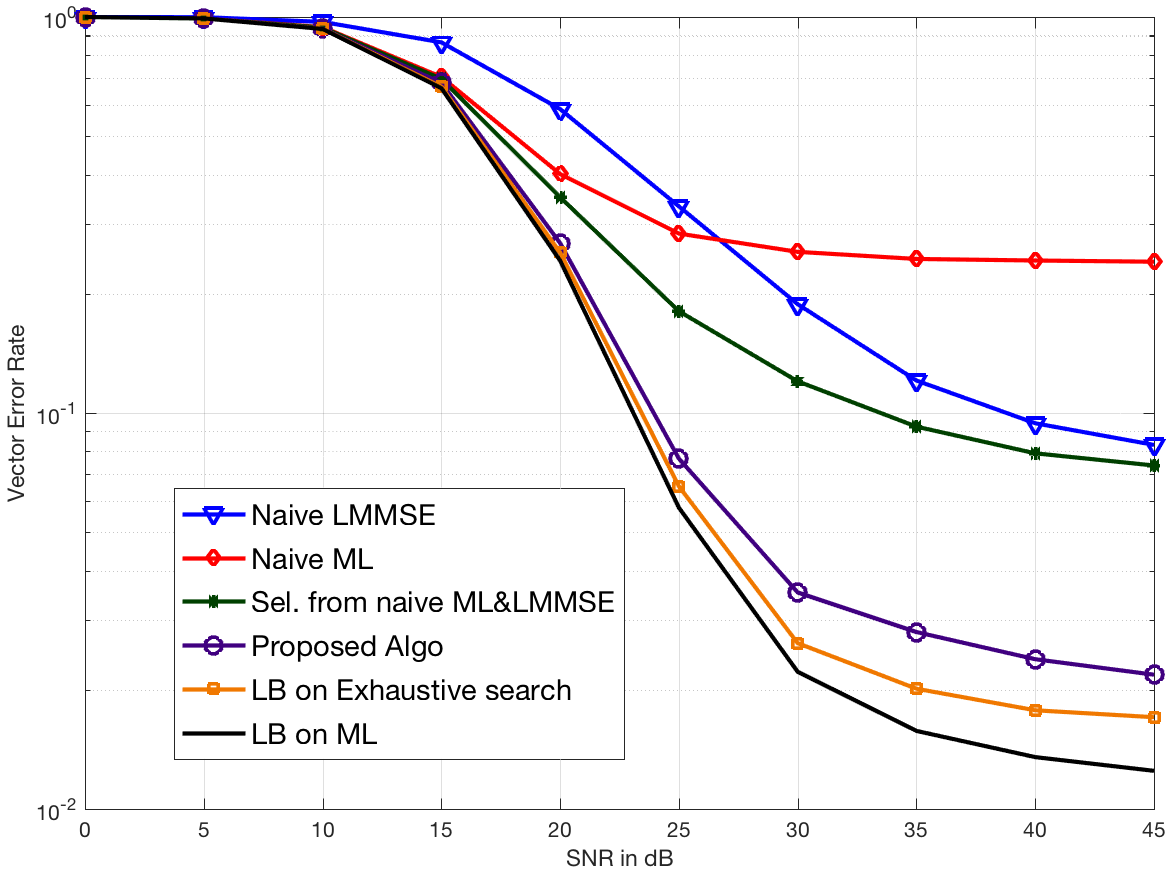}
\caption{$4\times4$, $64$-QAM, PN $4^\circ$.}
\label{fig:simo44-64}
  \end{subfigure}
  \begin{subfigure}[b]{0.329\textwidth}
\includegraphics[width=\textwidth]{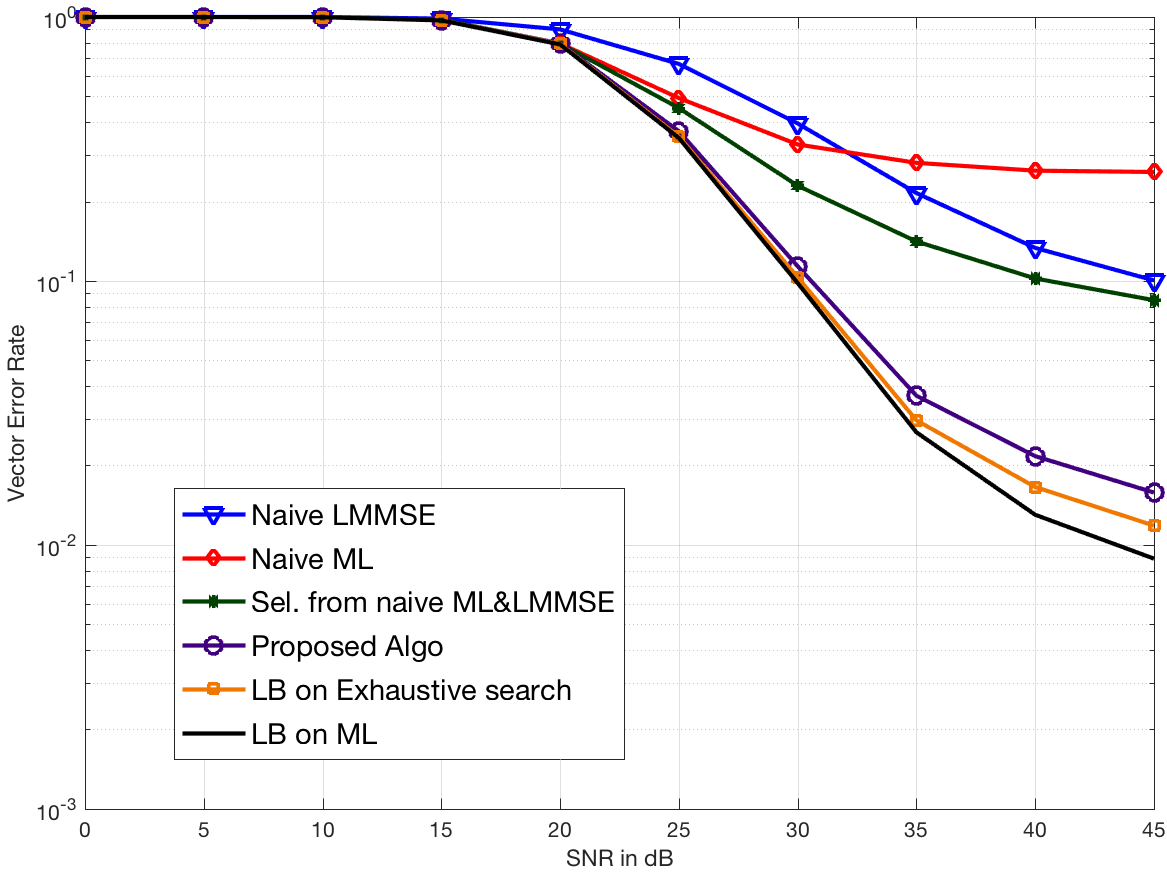}
\caption{$4\times4$, $256$-QAM, PN $2^\circ$.}
\label{fig:simo44-256}
  \end{subfigure}
  \begin{subfigure}[b]{0.329\textwidth}
\includegraphics[width=\textwidth]{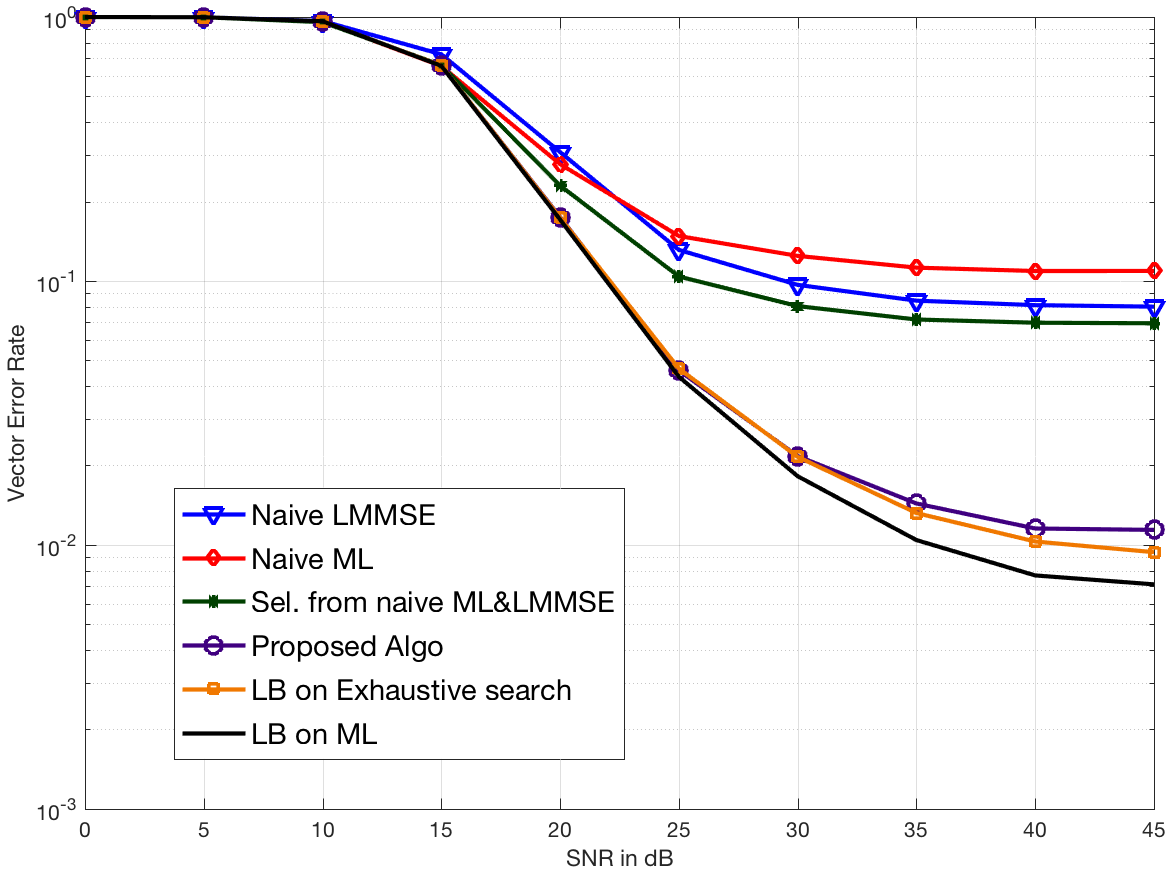}
\caption{$4\times10$, $256$-QAM, PN $2^\circ$.}
  \label{fig:simo410}
  \end{subfigure}
 \caption{Uplink SIMO channel. Four single-antenna users with i.i.d.~phase noise, multi-antenna receiver without phase
 noise.}
  \label{fig:SIMO}
\end{figure}

The third scenario is the uplink cellular communication channel with four single-antenna users and one
multi-antenna base station receiver. It is assumed that the phase noises at the users' side are i.i.d., whereas
there is no phase noise at the receiver side. This is a reasonable assumption since the oscillators at the base
station are usually of higher quality than those used by mobile devices. 
We assume i.i.d.~Rayleigh fading in this scenario where three configurations are considered, as shown in
Figure~\ref{fig:SIMO}. Unlike in the previous scenarios, the naive ML is dominated by the naive LMMSE at high
SNR. This somewhat counter-intuitive observation can be explained as follows.  Without receiver phase noise,
the channel can be inverted and we obtain a spatial parallel channel. Although channel inversion
incurs some power loss when the channel is not orthogonal, each of the resulting parallel subchannels sees an
independent phase noise. Therefore, the demodulation only suffers from a scalar self-interference. On the
other hand, with naive ML the receiver tries to find the closest vector in the image space of $\Hm$ to the
received vector $\yv$. Since the linear map $\Hm$ mixes the perturbation of the different transmit phase
noises, the naive ML detection, ignoring the presence of phase noise, suffers from the aggregated
perturbation from all the phase noises. That is why the naive LMMSE can be better than the naive ML scheme in the
high SNR regime where phase noise dominates the additive noise.  In the case when both the transmitter and the
receiver have comparable phase noises, such a phenomenon is rarely observed since the channel inversion also
increases the perturbation with the presence of receiver phase noises. 

From Figure~\ref{fig:SIMO}, we remark that as before the proposed SIW scheme is superior to the other schemes.
With a relatively strong phase noise of $4^\circ$, the error rate of SIW is $3$ to $4$ times lower
than that of the naive schemes and is at most twice that of the ML lower bound. With a smaller phase noise of $2^\circ$,
the SIW scheme can support $256$-QAM with four receive antennas, achieving an error rate $5$ times lower than that of the naive schemes. In Figure~\ref{fig:simo410}, we increase the number of receive antennas to $10$, we see that
the gap between the naive schemes is decreased due to the increased orthogonality of the channel. Nevertheless,
the gap between the naive schemes and the proposed scheme does not decrease since the orthogonality between the
users does not reduce the impact of the phase noise from the transmitter side. We could expect the same
observation even with massive MIMO. Nevertheless, with massive MIMO uplink, the receiver phase noise can be
mitigated substantially due to the asymptotic orthogonality~\cite{bjornson2014massive}.

\section{Further Discussion and Experiments}
\label{sec:discussions}

\subsection{Robustness to the phase noise distribution} 

\begin{figure}[t]
  \centering
\includegraphics[width=0.5\textwidth]{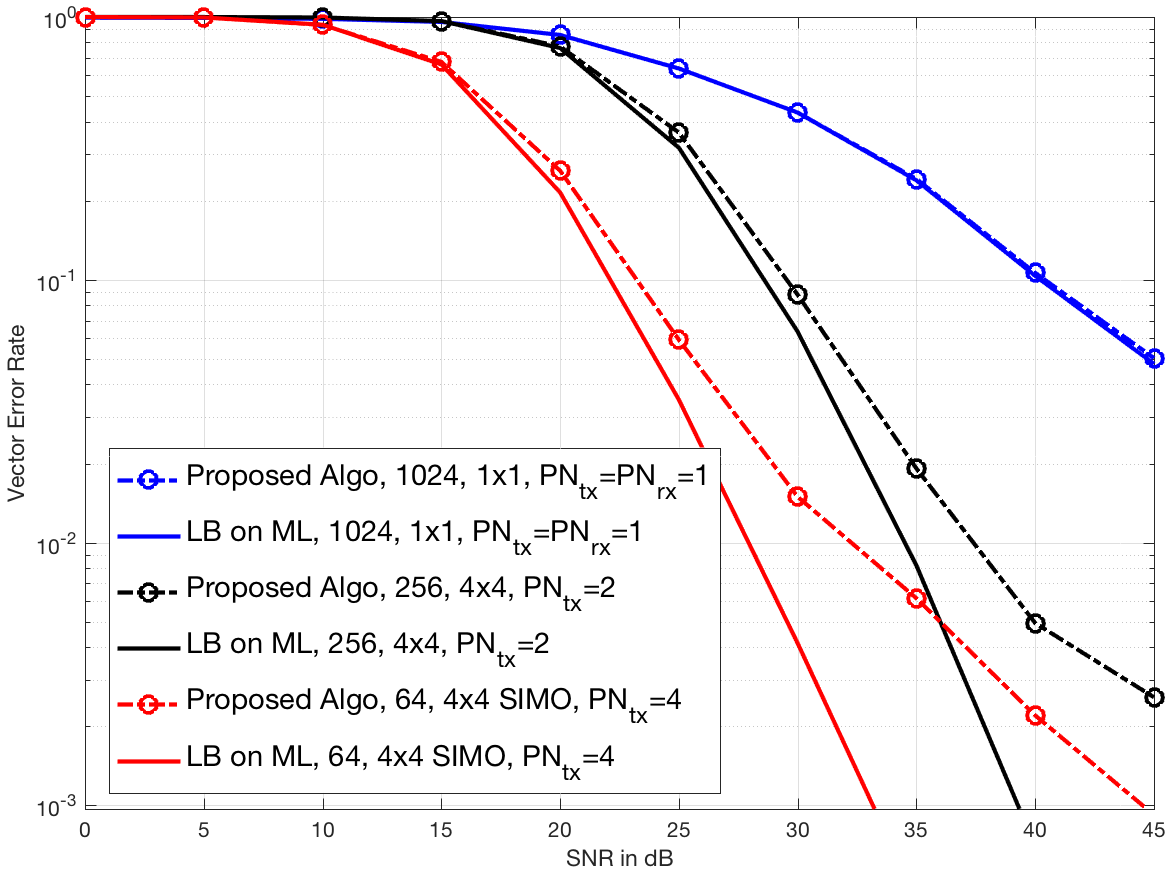}
\caption{Uniform phase noise at both transmitter and receiver, Rayleigh fading.}
\label{fig:uniform}
\end{figure}

One of the main assumptions of our work is that the phase noise follows a Gaussian distribution.  Indeed, our
derivation of the closed-form approximation~\eqref{eq:aML} depends on this assumption. We have seen that this approximation is very accurate in various practical scenarios with Gaussian phase noise. In practice, however, phase noises may not be Gaussian, which leads to the following natural
question on the robustness: \emph{Does the proposed algorithm still work when the phase noise is not Gaussian?}
The answer turns out to be positive when we let the phase noise be uniformly distributed. In
Figure~\ref{fig:uniform}, we consider three previous configurations~(shown in Figure~\ref{fig:SISO1024},
\ref{fig:simo44-64}, and \ref{fig:simo44-256}, respectively) but with uniform phase noises. 
From the results, we see that the proposed algorithm works well as in the Gaussian phase noise, especially when
the phase noise is small. 
In fact, we believe that the phase noise distribution with certain regularity~(e.g., continuous and bounded
density) should not have great impact on the performance of the proposed algorithm when the phase noise
is not too strong.

\subsection{Further improvement with more iterations}

Although the proposed scheme achieves near ML performance in many cases, there is still room for improvement in
some situations. As shown in Figure~\ref{fig:SIMO} for the $4 \times 4$ channel with $64$-QAM, the error floor of
the proposed scheme is three times higher than the one of the exhaustive search. In order to reduce the error
floor, we propose to introduce a list of candidate solutions by iteration. We recall that in the SIW scheme we
start from the best solution between naive ZF and naive ML, and we use this solution as a starting point to estimate the
matrix $\Wm_x$ followed by a nearest neighbor detection. The SIW algorithm replaces the starting point with the
newly found one if the latter has a higher approximated likelihood. We can extend the SIW algorithm with more
iterations. Specifically, we can fix a maximum number of iterations. As long as the newly found point is not
inside the list, it should be added to the list and we continue to iterate. The procedure stops either when we
hit the maximum number of iterations or we find a point already inside the list. At the end, we select the
point with the highest likelihood value from the list.

\begin{figure}[t]
  \centering
\includegraphics[width=0.6\textwidth]{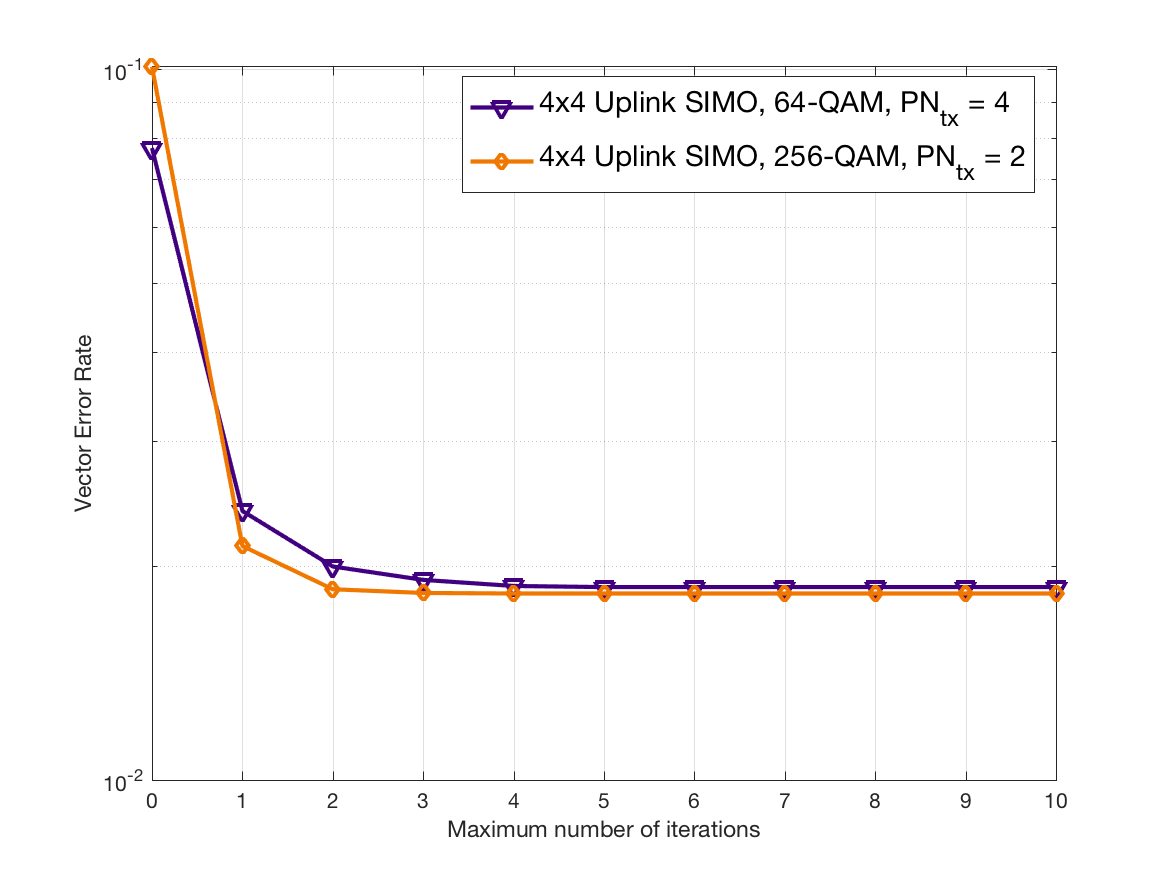}
\caption{Impact of iterations.}
\label{fig:iteration}
\end{figure}

\subsection{Potential issues in the very high SNR regime}

At very high SNR, the proposed approximation~\eqref{eq:aML} may become less accurate and the performance loss
increases with SNR. It can be seen from \eqref{eq:llh0} that the approximation error of
$\|\yv-\Hm_{\!\rvVec{\Theta}}\,\xv\|$ has greater impact as $\SNR$ grows. Numerical experiments show that such
performance loss may be apparent for SNR higher than $40$dB depending on the constellation size. A
straightforward workaround is to impose a ceiling value of the SNR $\SNR_{\max}$ for the decoder. In other words,
for SNR higher than the threshold $\SNR_{\max}$, we let the decoder work as if it were at $\SNR_{\max}$.
Intuitively, the probability of error cannot be larger than the one at $\SNR_{\max}$, since we are feeding less
noisy observations to the decoder than what they are supposed to be. With the
decoding function, the observation space can be partitioned into $|\mathcal{X}|^{\nt}$ regions, each one of
which corresponds to a vector in $\mathcal{X}^{\nt}$. The probability of decoding error is the probability that
the observation is outside of the decoding region corresponding to the actual input. With a smaller variance of
AWGN, the observation has a higher probability to be inside the region, hence a lower probability of error. 
Nevertheless, the formal proof of this argument is not trivial, and is outside of the scope of the current
paper.\footnote{We can always add an artificial noise to the observation in order to reduce the SNR to
$\SNR_{\max}$ before the detection. In this way we have an error floor for SNR beyond $\SNR_{\max}$.}  

Another issue at very high SNR is that the likelihood may be too small as compared to the finite
numerical precision. Therefore, it becomes impossible to obtain the simulation-based lower bound in a reliable
way. Furthermore, the number of Monte-Carlo samples required to reach any given accuracy grows with the SNR.

\subsection{On the practical validity of the adopted discrete-time model}
\label{sec:validity}

The discrete-time channel model~\eqref{eq:input-output} that we adopt in this work is a
simplification of the waveform phase noise channel. Indeed, the discrete-time output sequence is
obtained from filtering followed by sampling in most communication systems. Filtering a waveform
corrupted by phase noise results in not only phase perturbation but also amplitude
variation~\cite{Foschini-PN}. In the following, we shall show that the amplitude variation is
negligible in the practical regime of interest. For simplicity, we focus on the single-antenna
case with a rectangular filter~(i.e., an integrator). We adhere to the commonly accepted Wiener
model for the continuous-time phase noise process $\{\rV{\Theta}(t)\}$. In particular,
$\rV{\Theta}(t)\sim \mathcal{N}(0,\beta t)$. The equivalent filtered channel gain for the $k$~th
symbol interval of duration $T$ is
\begin{align}
\frac{1}{T} \int_{kT}^{(k+1)T} e^{j \rV{\Theta}(t)} \mathrm{d} t 
&\overset{d}{=} e^{j \rV{\Theta}(kT)} \frac{1}{T} \int_{0}^{T} e^{j \rV{\Theta}(t)} \mathrm{d} t
\end{align}%
from the property of a Wiener process where $\overset{d}{=}$ means equality in distribution.
Assuming that we can somehow track the \emph{past state} $\rV{\Theta}(kT)$ perfectly, we now focus
on the following random variable due to the residual phase noise corresponding to the
\emph{innovation part}: 
\begin{align}
  \rV{B}(\beta,T) :\!\!&= \frac{1}{T} \int_{0}^{T} e^{j \rV{\Theta}(t)} \mathrm{d} t \label{eq:def-B}\\
  &\overset{d}{=} \frac{1}{\beta T} \int_{0}^{\beta T} e^{j \tilde{\rV{\Theta}}(t)} \mathrm{d} t
  \label{eq:tmp88}\\
  &\overset{d}{=} \rV{B}(1, \beta T)
\end{align}%
where in \eqref{eq:tmp88} follows from $\rV{\Theta}(t) \overset{d}{=} \tilde{\rV{\Theta}}(\beta t)
$ for some normalized Wiener process $\{\tilde{\rV{\Theta}}(t)\}$ with $\tilde{\rV{\Theta}}(t)\sim
\mathcal{N}(0,t)$. We notice that the random variable $\rV{B}(\beta,T)$ depends on the parameters
$\beta$ and $T$ only through the product $\beta T$. The distribution of $\rV{B}(\beta,T)$
has been characterized both approximately~\cite{Foschini-PN} and exactly~\cite{wang2006solving}.
In particular, an approximate moment-generating function has been derived in \cite{Foschini-PN}
for small phase noise. We can use the results therein to obtain the following characterization.  
\begin{proposition} \label{prop:MvsPhi}
  Let us consider the polar representation $\rV{B}(\beta,T) = \rV{G} e^{j \rV{\Phi}}$ with
  $\rV{G}\ge0$ and $\rV{\Phi}\in[-\pi,\pi)$. 
  When $S := \beta T$ is small, we have
  \begin{align}
    \Var\left( \rV{\Phi} \right) &\approx \frac{S}{3}, \quad
    \Var\left( \rV{G} \right) \approx \frac{S^2}{180}, \label{eq:Var}\\
    \intertext{and thus,}
    \Var\left( \rV{G} \right) &\approx \frac{1}{20} \bigl[ \Var\left(
    \rV{\Phi} \right) \bigr]^2. \label{eq:MvsPhi}
  \end{align}%
\end{proposition}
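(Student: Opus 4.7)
The plan is to use the scaling identity $\rV{B}(\beta,T) \overset{d}{=} \rV{B}(1,S)$ with $S := \beta T$, established just before the statement, so that the claim reduces to asymptotic moment estimates on $\rV{B}(1,S)$ as $S \to 0$. I would Taylor-expand the integrand $e^{j\tilde{\rV{\Theta}}(t)} = 1 + j\tilde{\rV{\Theta}}(t) - \tilde{\rV{\Theta}}(t)^2/2 + \cdots$ for a standard Wiener process $\tilde{\rV{\Theta}}$ and integrate termwise, giving
\[
  \rV{B}(1,S) = 1 + jV - U/2 + R,
\]
where $V := \frac{1}{S}\int_0^S \tilde{\rV{\Theta}}(t)\,\mathrm{d}t$, $U := \frac{1}{S}\int_0^S \tilde{\rV{\Theta}}(t)^2\,\mathrm{d}t$, and $R$ is the cubic-and-higher remainder in $\tilde{\rV{\Theta}}$. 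Using the Wiener covariance $\E[\tilde{\rV{\Theta}}(s)\tilde{\rV{\Theta}}(t)] = \min(s,t)$, elementary double integrals over $[0,S]^2$ give $\E[V^2] = S/3$, $\E[U] = S/2$, and $\Var(U) = S^2/3$.

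For the phase variance, I would use $\arg(1+z) = \Im z + O(|z|^2)$ near $z=0$, so that $\rV{\Phi} \approx V$ at leading order. The subleading corrections to $\arg(1+\cdot)$ have standard deviation of order $S^{3/2}$, hence perturb $\Var(\rV{\Phi})$ only by $O(S^2)$. This immediately yields $\Var(\rV{\Phi}) \approx \E[V^2] = S/3$ as claimed.

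For the amplitude, I would expand $|\rV{B}|^2 = 1 - U + V^2 + \cdots$ and take a square root to obtain $\rV{G} - 1 \approx -U/2 + V^2/2$. The computation then reduces to
\[
  \Var\!\left(-\tfrac{U}{2} + \tfrac{V^2}{2}\right) = \tfrac{1}{4}\Var(U) + \tfrac{1}{4}\Var(V^2) - \tfrac{1}{2}\mathrm{Cov}(U,V^2),
\]
where $\Var(V^2) = 2S^2/9$ is immediate from $V \sim \mathcal{N}(0,S/3)$, and $\mathrm{Cov}(U,V^2)$ is evaluated by Isserlis' theorem applied to the four-point moment $\E[\tilde{\rV{\Theta}}(r)^2 \tilde{\rV{\Theta}}(s)\tilde{\rV{\Theta}}(t)] = r\min(s,t) + 2\min(r,s)\min(r,t)$. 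Carrying out the resulting iterated integrals over $[0,S]^3$ and combining the three contributions on the common denominator $180$ produces exactly $S^2/180$, establishing the second part of \eqref{eq:Var}. Identity \eqref{eq:MvsPhi} then follows by substituting $S = 3\Var(\rV{\Phi})$ into $\Var(\rV{G}) \approx S^2/180$.

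The main obstacle is controlling the remainder $R$ finely enough: one must verify that the cubic term $-j\tilde{\rV{\Theta}}^3/6$ appearing in $e^{j\tilde{\rV{\Theta}}}$, together with the quadratic correction from expanding $\sqrt{1 + (|\rV{B}|^2 - 1)}$, contribute to $\Var(\rV{G})$ only at order $S^3$ or higher. I would handle these via Cauchy--Schwarz bounds on covariances combined with the rough size estimates $V = O_\mathbb{P}(S^{1/2})$ and $W := \frac{1}{S}\int_0^S \tilde{\rV{\Theta}}(t)^3\,\mathrm{d}t = O_\mathbb{P}(S^{3/2})$, which show that the neglected cross terms have variance at most $O(S^3)$. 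The coefficient $1/180$ is sensitive to this bookkeeping since it arises from a near-cancellation of three terms of comparable magnitude, so the integrals must be computed with care.
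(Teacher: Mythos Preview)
Your proposal is correct and the computations check out: the triple integral for $\mathrm{Cov}(U,V^2)$ gives $4S^2/15$, and combining $\tfrac{1}{4}\cdot\tfrac{S^2}{3}+\tfrac{1}{4}\cdot\tfrac{2S^2}{9}-\tfrac{1}{2}\cdot\tfrac{4S^2}{15}=\tfrac{S^2}{180}$ exactly as claimed. Your remainder control is also sound, since all neglected contributions to $G-1$ are $O_{\mathbb{P}}(S^2)$ and their covariances with the leading $O_{\mathbb{P}}(S)$ part are $O(S^3)$ by Cauchy--Schwarz.

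However, your route is genuinely different from the paper's. The paper does not expand the Wiener integrals directly; instead it quotes an approximate joint moment-generating function $M_{G,\Phi}(\xi,\eta)$ for $(G,\Phi)$ from Foschini's earlier work, reads off the marginal MGFs $M_G(\xi)=e^{\xi}\,\mathrm{sinhc}^{-1/2}(\sqrt{S\xi})$ and $M_{\Phi}(\eta)=\exp(S\eta^2/6)$, and then obtains the variances by differentiating at zero. That argument is much shorter but outsources all the analytic work to the cited reference, and the ``$\approx$'' in the statement reflects that Foschini's MGF is itself only valid for small $S$. Your approach is self-contained and makes the small-$S$ nature of the approximation explicit through the Taylor remainder, at the cost of having to track the delicate three-term cancellation yielding $1/180$ by hand; the paper's MGF route hides that cancellation inside the $\mathrm{sinhc}^{-1/2}$ expansion.
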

In Figure~\ref{fig:wiener}, we compare the approximation given by~\eqref{eq:MvsPhi} to the correct value obtained numerically. The approximation~\eqref{eq:MvsPhi} is surprisingly accurate even for a standard deviation of $20^\circ$ for the phase. More importantly, such results show that for a
small phase perturbation up to $5$ degrees -- this is the regime of interest in the present work --
the amplitude variation is less than $-55$~dB. Therefore, the interference caused by the
amplitude variation is dominated by the AWGN and can be treated as noise without any performance
loss. In other words, the discrete-time model adopted in this work -- ignoring the amplitude
variation -- is indeed valid for phase noise with moderate variance.  

\begin{figure}[t]
  \centering
\includegraphics[width=0.7\textwidth]{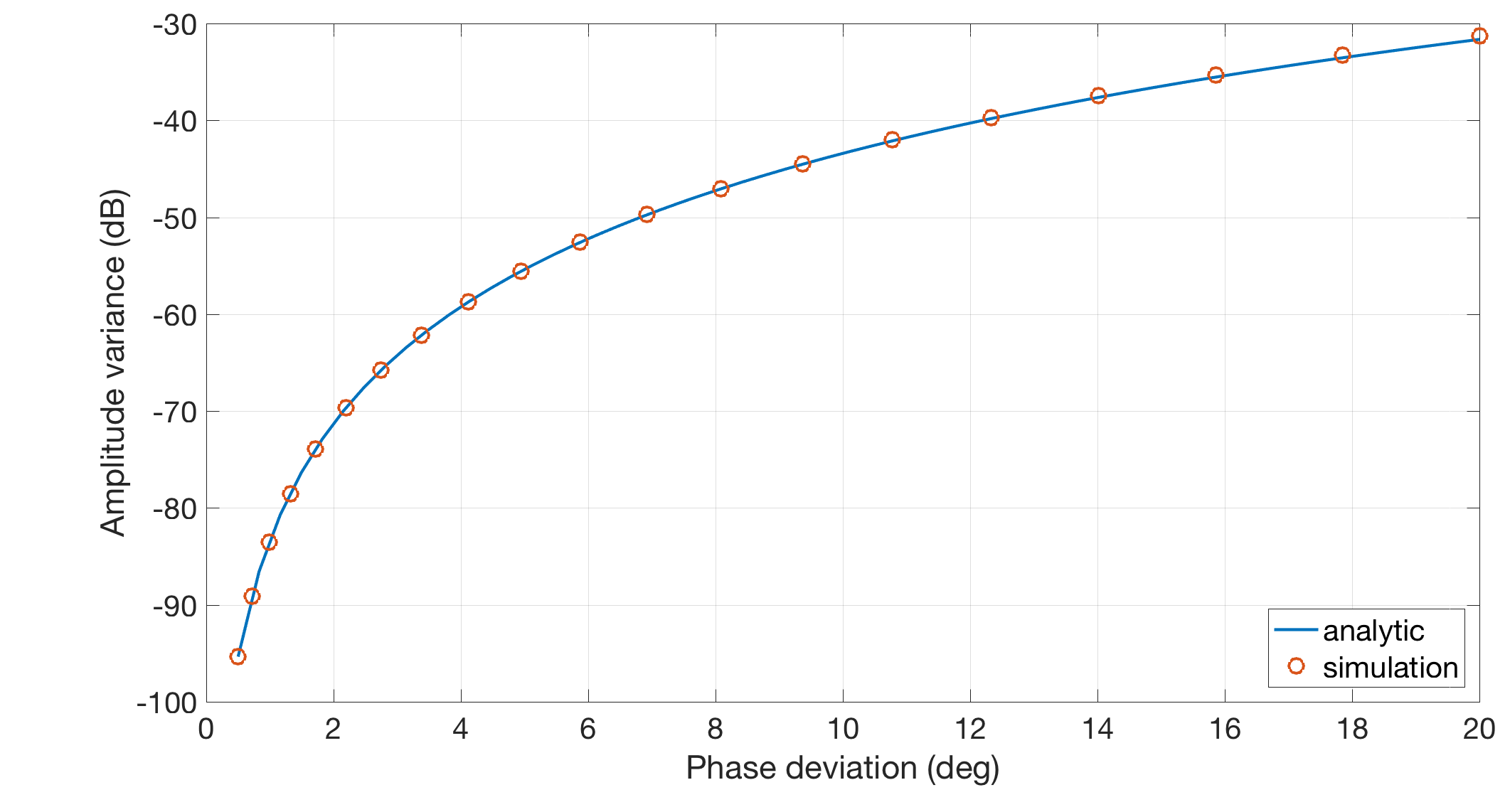}
\caption{Amplitude vs phase perturbation for filtered channel gain.}
\label{fig:wiener}
\end{figure}

\subsection{Constellation design}
The purpose of the current work is to design a detector that takes into account the phase
perturbation for existing systems in which typical QAM signaling is used. As we observe from the
numerical experiments, in some of the scenarios, even the lower bound on the ML detection error
exhibits an error floor. The error floor is however not a fundamental
limit of either the detector or the channel. With a carefully designed signaling scheme, the
probability of ML detection error can be arbitrarily small with an increasing SNR. For instance,
if we use amplitude modulation, then when the SNR grows, the amplitude ambiguity
decreases and the detection error vanishes. The cost of such an extreme scheme is a reduced
spectral efficiency. Algorithms to design a constellation based on the statistics of both the
additive and phase noises have been proposed in the literature~(see
\cite{Foschini-PN-constellation, Amat-constellation} and the references therein). To the best of the authors' knowledge, only SISO has been
considered so far. In the MIMO case, such constellations can still be used and should provide
improvements over QAM. The difficulty with non-QAM constellation lies in the MIMO detection part,
since efficient NND algorithms cannot be applied directly. The constellation design problem for
MIMO phase noise channels is a challenging and interesting problem in its own right, which is however
out of the scope of the current work.

\section{Conclusions}
\label{sec:conclusions}

In this work, we have studied the ML detection problem for uncoded MIMO phase noise channels. We have proposed
an approximation of the likelihood function that has been shown to be accurate in the regimes of practical
interest. More importantly, thanks to the geometric interpretation of the approximate likelihood function, we
have designed a simple algorithm that can solve approximately the optimization problem with only two nearest
neighbor detections. Numerical experiments show that the proposed algorithm can greatly mitigate the impact of phase noises
in different communication scenarios.  

\bibliographystyle{IEEEtran}
\bibliography{./biblio}

\allowdisplaybreaks[0]

\appendix
\subsection{Proof of Proposition~\ref{th:radius}}

Define the vector $\rvVec{V} := (\rvLambdaR \rvMat{H} \rvLambdaT - \rvMat{H}) \rvVec{X}$ so that
$\rV{R}^2 = \| \rvVec{V} + \rvVec{Z} \|^2$. We have:
	\begin{align}
          \rV{R}^2 &=  \| \rvVec{V} \|^2 +  \| \rvVec{Z} \|^2 + \rvVec{Z}^H \rvVec{V} + \rvVec{V}^H \rvVec{Z} \\
          \rV{R}^4 &= ( \| \rvVec{V} \|^2 +  \| \rvVec{Z} \|^2)^2 + (\rvVec{Z}^H \rvVec{V} + \rvVec{V}^H \rvVec{Z})^2 + 2 ( \| \rvVec{V} \|^2 +  \| \rvVec{Z} \|^2)^2(\rvVec{Z}^H \rvVec{V} + \rvVec{V}^H \rvVec{Z}).
	\end{align}
        Using the fact that $\rvVec{V}$ and $\rvVec{Z}$ are independent and $\rvVec{Z}$ has
        i.i.d.~$\mathcal{CN}(0,\gamma^{-1})$ entries, we verify that
	\begin{align}
          \E[\rV{R}^2] &=  \E[\| \rvVec{V} \|^2] +  \E[\| \rvVec{Z} \|^2], \label{eq:ER2}\\
          \E[\rV{R}^4] &= \E[\| \rvVec{V} \|^4] + \E[\| \rvVec{Z} \|^4]  + 2 \E[\| \rvVec{Z}
          \|^2]\E[\| \rvVec{V} \|^2] + 2 \gamma^{-1} \|\rvVec{V}\|^2, \\
          \Var(\rV{R}^2) &= \Var(\| \rvVec{V} \|^2) + \Var(\| \rvVec{Z} \|^2) + 2 \gamma^{-1} \E
          \|\rvVec{V}\|^2. \label{eq:VarR2}
	\end{align}
        Since $\E[\| \rvVec{Z} \|^2] = \nr \gamma^{-1}$ and $\Var(\| \rvVec{Z} \|^2) = 2 \nr
        \gamma^{-2}$,
        it follows that 
	\begin{align}
          \E[\rV{R}^2] &=  \E[\| \rvVec{V} \|^2] + \gamma^{-1}\, \nr, \\
          \Var(\rV{R}^4) &= \Var(\| \rvVec{V} \|^2) + 2 \gamma^{-2} \nr + 2 \gamma^{-1} \E[\|\rvVec{V}\|^2].
	\end{align}
	 We now calculate the moments of $\|\rvVec{V}\|^2$.
	 
         {\bf Expectation over $\rvMat{H}$}: Let us define $\rV{A}_{k,l} := \rV{X}_l(
         e^{j(\rV{\Theta}_{t,l} + \rV{\Theta}_{r,k})} - 1)$ and write, conditional on $\rvMat{A}$, 
	\begin{equation}
          \| \rvVec{V} \|^2 =  \sum_{k=1}^{\nr} \left| \sum_{l=1}^{\nt} \rV{H}_{k,l} \rV{A}_{k,l}
          \right|^2 \overset{d}{=} \sum_{k=1}^{\nr} \rV{E}_k,
	\end{equation}% 
        where $\rV{E}_k\sim \mathrm{Exp}(1/\|\rvVec{A}_{k}\|^2)$ with $\rvVec{A}_k :=
        [\rV{A}_{k,l}]_{l=1,\ldots,\nt}$, $k=1,\ldots,\nr$;
        we used the fact that $\rvMat{H}$ has i.i.d.~$\mathcal{CN}(0,1)$ entries. 
        Then we can calculate the first and second moments of $\| \rvVec{V} \|^2$ for a given 
        $\rvMat{A} = [\rV{A}_{j,k}]_{j,k}$: 
        \begin{align}
          \E_{\rvMat{H}|\rvA}[\| \rvVec{V} \|^2] &= \sum_{k=1}^{\nr} \E_{\rvMat{H}|\rvA}[
          \rV{E}_k ] = \sum_{k=1}^{\nr} \sum_{l=1}^{\nt} |\rV{A}_{k,l}|^2, \label{eq:EVA} \\
          \E_{\rvMat{H}|\rvA}\left[ \| \rvVec{V} \|^4 \right] &= \E_{\rvMat{H}|\rvA} \left[ \sum_{k=1}^{\nr} \rV{E}_k^2 \right] + \E_{\rvMat{H}|\rvA} \left[ \sum_{k'=1\atop k'\ne k}^{\nt} \sum_{k=1}^{\nt} \rV{E}_k \rV{E}_{k'}  \right] \\
          &= \sum_{k=1}^{\nr} 2 \|\rvVec{A}_{k}\|^4  + \left[ \sum_{k'=1\atop k'\ne k}^{\nt} \sum_{k=1}^{\nt}  \|\rvVec{A}_{k}\|^2 \|\rvVec{A}_{k'}\|^2 \right] \\
          &=  \sum_{k'=1}^{\nr} \sum_{k=1}^{\nr}  (1 + \indic\{ k=k'\}) \|\rvVec{A}_{k}\|^2
          \|\rvVec{A}_{k'}\|^2.  \label{eq:EV2A}
        \end{align}%

        {\bf Moments of $\rvA$:} We recall that $|\rV{A}_{k,l}|^2 = 2 |\rV{X}_l|^2
        (1-\cos(\rV{\Theta}_{r,k} + \rV{\Theta}_{t,l}))$. We have that $\E\bigl[|\rV{X}_l|^2\bigr]
        = 1$
        and define $\E\bigl[|\rV{X}_l|^4\bigr] = \bar{P}^2$, for $l=1,\ldots,\nt$. Using the
        independence and the identity $\E[\cos(\rV{\Theta})] =
        \exp\bigl(-\frac{\Var(\rV{\Theta})}{2}\bigr)$ for zero-mean Gaussian $\rV{\Theta}$,  we
        obtain
	\begin{equation}
          \E[|\rV{A}_{k,l}|^2] = 2 \E[|\rV{X}_l|^2] \E[1 - \cos(\Theta_{t,l} + \Theta_{r,k})] = 2 
          \Bigl(1 - e^{-{\sigma_r^2 + \sigma_t^2 \over 2}}\Bigr). \label{eq:EA}
	\end{equation}
	We now calculate the correlation between the entries of $\rvA$
	\begin{equation}	
          \E(|\rV{A}_{k,l}|^2|\rV{A}_{k',l'}|^2) = 4
          \E(|\rV{X}_{l}|^2|\rV{X}_{l'}|^2)\,\rho_{k,k'\!\!,\,l,l'} 
	\end{equation}
		with
	\begin{align}
		\rho_{k,k'\!\!,\,l,l'} :\!\!&= \E(  (1 - \cos(\Theta_{t,l} + \Theta_{r,k}))(1 - \cos(\Theta_{t,l'} + \Theta_{r,k'}))) \\
                &= 1 - 2 e^{-{\sigma_r^2 + \sigma_t^2 \over 2}} + e^{-\sigma_r^2 - \sigma_t^2}
                \cosh( \sigma_r^2(\indic\{k = k'\} + \sigma_t^2 \indic\{l = l'\}) ),
	\end{align}
        where to obtain the last equality we use the trigonometric identities and again apply the
        identity $\E[\cos(\rV{\Theta})] = \exp\bigl(-\frac{\Var(\rV{\Theta})}{2}\bigr)$; we recall
        that $\cosh(x) = \frac{e^{x}+e^{-x}}{2}$.

	{\bf Moments of $\| \rvVec{V} \|^2$:} 
        From \eqref{eq:EVA} and \eqref{eq:EA}, we have the first moment
	\begin{equation}
          \E[\| \rvVec{V} \|^2] = \sum_{k=1}^{\nr} \sum_{l=1}^{\nt} \E[|\rV{A}_{k,l}|^2] = \nt \nr
          2 \Bigl(1 - e^{-{\sigma_r^2 + \sigma_t^2 \over 2}}\Bigr). \label{eq:EV}
	\end{equation}
        For the variance, we apply \eqref{eq:EV2A} and \eqref{eq:EV} 
	\begin{align}
          \Var\left( \| \rvVec{V} \|^2 \right) &= \E[\| \rvVec{V} \|^4] - \bigl( \E[\| \rvVec{V}
          \|^2]\bigr)^2 \\ 
          &= \sum_{k'=1}^{\nr} \sum_{k=1}^{\nr} \sum_{l'=1}^{\nt} \sum_{l=1}^{\nt} \Bigl( \E[|\rV{A}_{k,l}|^2
          |\rV{A}_{k',l'}|^2] (1 + \indic\{ k=k'\}) - \E[|\rV{A}_{k,l}|^2] \E[|\rV{A}_{k',l'}|^2]
          \Bigr). \label{eq:sum}
	\end{align}
        Noting that $\E[|\rV{A}_{k,l}|^2 |\rV{A}_{k',l'}|^2] = \E[|\rV{A}_{k,l}|^2]
        \E[|\rV{A}_{k',l'}|^2]$ if $k\ne k'$ and $l \ne l'$, we obtain the variance
	\begin{equation}
		\Var(\| \rvVec{V} \|^2) = 4\nt\nr\Bigl( w_1 +
                w_2  (\nt-1) + w_3 (\nr-1) \Bigr), \label{eq:VarV2}
	\end{equation}
        where $w_1,w_2,w_3>0$ do not depend on $\nt,\nr$ and correspond to the cases $(k=k',l=
        l')$, $(k=k',l\ne l')$, $(k\ne k',l= l')$, respectively, in the summation \eqref{eq:sum},
	\begin{align}
		w_1 &:= 2 \bar{P}^2 (1 - 2 e^{-\sigma^2} + e^{-2 \sigma^2} \cosh( 2 \sigma^2)) -
                (1 - e^{-\sigma^2})^2, \\ 
	  w_2 &:= 2  ( 1 - 2 e^{-\sigma^2} + e^{-2 \sigma^2} \cosh(\sigma_r^2)) - (1 -
          e^{-\sigma^2})^2, \\
	  w_3 &:= \bar{P}^2 ( 1 - 2 e^{-\sigma^2} + e^{-2 \sigma^2} \cosh(\sigma_t^2)) - (1 -
          e^{-\sigma^2})^2,
	\end{align}
        where we define $\sigma^2 := {\sigma_t^2 + \sigma_r^2 \over 2}$. 
	
        {\bf Putting it together:} From \eqref{eq:ER2} and \eqref{eq:EV}, we have
	\begin{equation}
          \E[\rV{R}^2] = \E[\| \rvVec{V} \|^2]  + \gamma^{-1} \nr = \nt \nr 2 \Bigl(1 -
          e^{-{\sigma_r^2 + \sigma_t^2 \over 2}}\Bigr) + \gamma^{-1} \nr,
	\end{equation}
        which yields the first result. Note that $\E[\rV{R}^2] \ge c_1 \nt\nr$ for some constant
        $c_1>0$ with respect to $(\nt,\nr)$. From \eqref{eq:VarR2} and \eqref{eq:VarV2}, we have
        proven that there exists a constant $c_2>0$ such that
	\begin{equation}
          \Var(\rV{R}^2) = \Var(\| \rvVec{V} \|^2) + 2 \gamma^{-2} \nr + 2 \gamma^{-1} \E[\|
          \rvVec{V} \|^2] \le  c_2 \nr \nt (\nt + \nr).
	\end{equation}
	Hence
	\begin{equation}
          {\Var( \rV{R}^2) \over (\E[ \rV{R}^2])^2} \le
          \frac{c_2}{c_1^2}\left( {1 \over \nr} + {1 \over \nt} \right)  \to
          0, \quad \nt,\nr \to \infty,
	\end{equation}
	and applying Chebychev's inequality yields the second result:
	\begin{equation}
          \PP\left\{ (1-\eta) \E[\rV{R}^2] \le \rV{R}^2 \le (1+\eta) \E[\rV{R}^2]\right\} \to 1, \nt,\nr \to \infty.
	\end{equation}
	
\subsection{Proof of Proposition~\ref{prop:MvsPhi}}
\newcommand{\sinhc}{\mathrm{sinhc}}
\newcommand{\tanhc}{\mathrm{tanhc}}
\newcommand{\cothc}{\mathrm{cothc}}
From \cite[eq.6]{Foschini-PN}\footnote{Note that the random variable $B$ in \eqref{eq:def-B}
differs with the one in \cite[eq.1]{Foschini-PN} in a normalization factor $T$. The MGF has been
scaled accordingly.}, we can derive the moment-generating function~(MGF) of
$(\rV{G},\rV{\Phi})$
\begin{multline}
  M_{\rV{G},\rV{\Phi}}(\xi, \eta) = e^{\xi}\,\sinhc^{-\frac{1}{2}}\left( \sqrt{S \xi}
  \right) \\ \cdot \exp\left[ \frac{S \eta^2}{8} \left[ \cothc\left( \sqrt{\frac{S\xi}{4}} \right)
  - \left( \frac{S\xi}{4} \right)^{-1} + \tanhc\left( \sqrt{S \xi} \right) \right] \right],
  \label{eq:mgf}
\end{multline}%
where we define $\sinhc(x):=\frac{\sinh(x)}{x}$, $\cothc(x):=\frac{\coth(x)}{x}$,
$\tanhc(x):=\frac{\tanh(x)}{x}$, and recall that $S:=\beta T$. It follows that the MGF of
$\rV{G}$ and $\rV{\Phi}$ are 
\begin{align}
  M_{\rV{G}}(\xi) &= M_{\rV{G},\rV{\Phi}}(\xi, 0) = e^{\xi}\sinhc^{-\frac{1}{2}}\left( \sqrt{S
  \xi} \right), \label{eq:mgf-G} \\
  M_{\rV{\Phi}}(\eta) &= M_{\rV{G},\rV{\Phi}}(0, \eta) = \exp\left( \frac{1}{6}S\eta^2 \right),
  \label{eq:mgf-Phi}
\end{align}%
where we used the fact that, when $x\to0$, $\sinhc(x) = 1 + O(x^2)$, $\tanhc(x) = 1 + O(x^2)$, and
\begin{align}
  \cothc(x) - \frac{1}{x^{2}}  &= \frac{1}{3} + O(x^2). 
\end{align}%
After finding the first and second derivatives of both MGF \eqref{eq:mgf-G} and \eqref{eq:mgf-Phi}
with some elementary manipulations, we obtain the desired variances
\begin{align}
  \Var\left( \rV{G} \right) &= \E\left( \rV{G}^2 \right) - \left(\E\left( \rV{G}\right) \right)^2
  = M_{\rV{G}}''(0) - \left( M_{\rV{G}}'(0) \right)^2 = \frac{S^2}{180}, \\
  \Var\left( \rV{\Phi} \right) &= \E\left( \rV{\Phi}^2 \right) - \left(\E\left( \rV{\Phi}\right)
  \right)^2 = M_{\rV{\Phi}}''(0) - \left( M_{\rV{\Phi}}'(0) \right)^2 = \frac{S}{3}. 
\end{align}%
Note that we use approximate equality in \eqref{eq:Var} and \eqref{eq:MvsPhi} since the MGF
derived in \cite{Foschini-PN} is indeed approximative with the assumption of small $S$.

\end{document}